\documentclass[onecolumn]{article}

%\usepackage[utf8]{inputenc}%(only for the pdftex engine)
%\RequirePackage[no-math]{fontspec}%(only for the luatex or the xetex engine)
%\usepackage[big]{dgruyter}

% https://tex.stackexchange.com/questions/352580/package-color-error-argument-1-01125-no-in-range-0-1
%% \makeatletter
%% \p@=1pt
%% \makeatothe

\usepackage{url}

\usepackage{authblk}

%% Added by JAGB
\usepackage{multibib}
\usepackage{amsmath,amsfonts,amsthm,amssymb,thmtools}
\usepackage{bm}
\usepackage{natbib}
\usepackage{graphicx}
\usepackage{float}
\usepackage{placeins}
\usepackage{comment}
\usepackage{tikz}
\usepackage{booktabs}
\newcommand\independent{\protect\mathpalette{\protect\independenT}{\perp}}
\def\independenT#1#2{\mathrel{\rlap{$#1#2$}\mkern2mu{#1#2}}}
\DeclareMathOperator*{\argmin}{arg\,min}

%% Keep (as of 1-26-20)
%
\newcommand{\yt}{{y^t}}
\newcommand{\yc}{{y^c}}
\newcommand{\yT}[1]{{y^t_{#1}}}
\newcommand{\yC}[1]{{y^c_{#1}}}
\newcommand{\yti}{\yT{i}}
\newcommand{\yci}{\yC{i}}

\newcommand{\ycj}{\yC{j}}
\newcommand{\tg}{\mathcal{T}}
\newcommand{\cg}{\mathcal{C}}
\newcommand{\nt}{{n_{t}}}
\newcommand{\nc}{{n_{c}}}
\newcommand{\tbar}{\bar{\tau}}
\newcommand{\ytbar}{{\bar{y}^t}}
\newcommand{\ycbar}{{\bar{y}^c}}
\renewcommand{\Pr}{\mathbb{P}}
\newcommand{\var}{\mathbb{V}}
\newcommand{\varhat}{\hat{\var}}
\newcommand{\syc}{S^2(Y_{\cg})}
\newcommand{\syt}{S^2(Y_{\tg})}
\newcommand{\src}{S^2(R_{\cg})}
\newcommand{\srt}{S^2(R_{\tg})}

\newcommand{\bx}{{\bm{x}}}
\newcommand{\xt}{{\tilde{\bm{x}}}}
\newcommand{\bxi}{{\bm{x}_i}}
\newcommand{\xti}{{\tilde{\bm{x}}_i}}
\newcommand{\xtj}{{\tilde{\bm{x}}_j}}

\newcommand{\predc}{\hat{y}^c(\cdot)}
\newcommand{\predt}{\hat{y}^t(\cdot)}
\newcommand{\predcx}{\hat{y}^c(\bm{x}_i)}
\newcommand{\predtx}{\hat{y}^t(\bm{x}_i)}
\newcommand{\predci}{\hat{y}^c_{-i}(\cdot)}
\newcommand{\predti}{\hat{y}^t_{-i}(\cdot)}
\newcommand{\predcxi}{\hat{y}^c_{-i}(\bm{x}_i)}
\newcommand{\predtxi}{\hat{y}^t_{-i}(\bm{x}_i)}
\newcommand{\predcxri}{\hat{y}^c_{-i}(x^r_i)}
\newcommand{\predtxri}{\hat{y}^t_{-i}(x^r_i)}
\newcommand{\predcxriLS}{\hat{y}^{c,\mathrm{LS}}_{-i}(\xti)}
\newcommand{\predcxriRF}{\hat{y}^{c,\mathrm{RF}}_{-i}(\xti)}
\newcommand{\predcxriEN}{\hat{y}^{c,\mathrm{EN}}_{-i}(\xti)}
\newcommand{\predcxrijLS}{\hat{y}^{c,\mathrm{LS}}_{-i,j}(\xtj)}
\newcommand{\predcxrijRF}{\hat{y}^{c,\mathrm{RF}}_{-i,j}(\xtj)}

\newcommand{\predr}{x^r}
\newcommand{\predri}{\predr_i}
\newcommand{\predrj}{\predr_j}
\newcommand{\predrx}{\hat{y}^r(\bm{x})}
\newcommand{\predrxi}{\hat{y}^r(\bm{x}_i)}
\newcommand{\predrfun}{\hat{y}^r(\cdot)}

\newcommand{\rti}{{r^t_i}}
\newcommand{\rci}{{r^c_i}}

\newcommand{\thim}{\hat{\tau}_i}
\newcommand{\thm}{\hat{\tau}}

\newcommand{\tls}{\hat{\beta}}%^{\mathrm{LS}}}
\newcommand{\tsd}{\hat{\tau}^{\mathrm{DM}}}
\newcommand{\thipw}{\hat{\tau}^{\mathrm{IPW}}}
\newcommand{\tss}{\hat{\tau}^{\mathrm{SS}}}
\newcommand{\trc}{\tss[\predr, \mathrm{LS}]}
\newcommand{\trcprf}{\tss[\xt, \mathrm{RF}]}
\newcommand{\trcpen}{\tss[\xt, \mathrm{EN}]}
\newcommand{\trebar}{\hat{\tau}^{\mathrm{RE}}}
\newcommand{\hmi}{\hat{m}_i}
\newcommand{\EE}{\mathbb{E}}

\newcommand{\Eth}{\hat{E}_{t}^2}
\newcommand{\Ech}{\hat{E}_{c}^2}

\newcommand{\tgr}{\hat{\tau}^{\mathrm{GR}}(b)}

\newcommand{\aci}{a_{-i}^{c}}
\newcommand{\ati}{a_{-i}^{t}}
\newcommand{\bci}{b_{-i}^{c}}
\newcommand{\bti}{b_{-i}^{t}}

\newcommand{\gci}{\gamma_{i}^{c}}

\newcommand{\pto}{\overset{p}{\to}}
\newcommand{\rgr}{R^{\mathrm{GR}}}
\newcommand{\srgrc}{S^2(\rgr_{\cg})}
\newcommand{\srgrt}{S^2(\rgr_{\tg})}
\newcommand{\Ect}{\tilde{E}_{c}^2}
\newcommand{\predcxrit}{\tilde{y}^c(x^r_i)}

\newcommand{\LOOP}{$\tss[\bx; \mathrm{RF}]$}
\newcommand{\ReLOOP}{$\trc$}

\newcommand{\ReLOOPEN}{$\trcpen$}

\usepackage[letterpaper,margin=1in]{geometry}

\begin{document}

  \title{\huge Precise unbiased estimation in randomized experiments using auxiliary observational data}

  %\articletype{Research Article{\hfill}Open Access}

  \author[1]{J. A. Gagnon-Bartsch\thanks{These authors contributed equally to the paper.}}
  \author[2]{A. C. Sales$^*$}
  \author[3]{E. Wu}
  \author[4]{A. F. Botelho}
  \author[5]{J. A. Erickson}
  \author[6]{L. W. Miratrix}
  \author[7]{N. T. Heffernan}

  \affil[1]{University of Michigan, Department of Statistics; E-mail: johanngb@umich.edu}
  \affil[2]{Worcester Polytechnic Institute, Department of Mathematical Sciences; E-mail: asales@wpi.edu}
  \affil[3]{University of Virginia, Biocomplexity Institute, Social and Decision Analytics Division}
  \affil[4]{University of Florida, College of Education}
  \affil[5]{Western Kentucky University, Analytics and Information Systems} %john.erickson@wku.edu
  \affil[6]{Harvard University, Graduate School of Education}
  \affil[7]{Worcester Polytechnic Institute, Department of Computer Science}

\date{}

\maketitle

  \begin{abstract}
{Randomized controlled trials (RCTs) admit unconfounded design-based inference---randomization largely justifies the assumptions underlying statistical effect estimates---but often have limited sample sizes. However, researchers may have access to big observational data on covariates and outcomes from RCT non-participants. For example, data from A/B tests conducted within an educational technology platform exist alongside historical observational data drawn from student logs. 
We outline a design-based approach to using such observational data for variance reduction in RCTs. First, we use the observational data to train a machine learning algorithm predicting potential outcomes using covariates, and use that algorithm to generate predictions for RCT participants. Then, we use those predictions, perhaps alongside other covariates, to adjust causal effect estimates with a flexible, design-based covariate-adjustment routine. In this way there is no danger of biases from the observational data leaking into the experimental estimates, which are guaranteed to be exactly unbiased regardless of whether the machine learning models are ``correct'' in any sense or whether the observational samples closely resemble RCT samples. We demonstrate the method in analyzing 33 randomized A/B tests, and show that it decreases standard errors relative to other estimators, sometimes substantially.
}
\end{abstract}
%List keywords for the work presented (maximum of 6), separated by commas. We suggest that keywords do not replicate those used in the title.
%Please include AMS Mathematics Subject Classification number(s) to which the paper can be attributed. More than a single classification number may be accepted. For AMS Mathematics Subject Classification (MSC 2020) see https://mathscinet.ams.org/msc/msc2020.html

%%%%%%%%%%%%%%%%%%%%%%%%%%%%%%%%%%%%%%%%%%%%%%%%%%%%%%%%%%%%%%%%%
\section{Introduction}
Randomized controlled trials (RCTs) are famously free of confounding bias.  Indeed, a class of estimators, often referred to as ``design-based'' \citep{schochet2015statistical} or ``randomization based'' \citep{rosenbaum:2002a}, estimate treatment effects without assuming any statistical model other than whatever is implied by the experimental design itself.  Design-based statistical estimators are typically guaranteed to be unbiased.  Their associated inference---standard errors, hypothesis tests, confidence intervals---also come with accuracy guarantees.  In many cases, these apply regardless of the sample size and require only very weak regularity conditions.

While RCTs can reliably provide unbiased estimates, they are often limited in terms of precision.  The statistical precision of RCT-based estimates is inherently limited by the RCT's sample size, which itself is typically subject to a number of practical constraints.

In contrast, large observational datasets can frequently be brought to bear on some of the same questions addressed by an RCT.  Analysis of observational data, unlike RCTs, typically requires a number of untestable modeling assumptions, chief among them the assumption of no unmeasured confounding.  Consequently, treatment effect estimates from observational data cannot boast the same guarantees to accuracy as estimates from RCTs.  That said, in many cases they boast a much larger sample---and, hence, greater precision---than equivalent RCTs.  

In many cases, observational and RCT data coexist within the very same database.  For instance, covariate and outcome data for a biomedical RCT may be drawn from a database of electronic health records, and that same database may contain equivalent records for patients who did not participate in the study and were not randomized.  Along similar lines, covariate and outcome data for an RCT designed to evaluate the impact of an educational intervention might be drawn from a state administrative database, and that database may also contain information on hundreds of thousands of students who did not participate in the RCT. We refer to these individuals, who are non-participants of the RCT but who are in the same database, as the \emph{remnant} from the study \citep{rebarPaper}.
We ask, how can we use the remnant to improve power to detect effects in RCTs?

An example from the field of education is \url{www.ETRIALStestbed.org} (formerly the ASSISTments TestBed \citep{heffernan2014assistments, ostrow2016assessment}).  
The TestBed is an A/B testing program designed for conducting education research that runs within ASSISTments, and has been made accessible to third-party education researchers.  Using the TestBed, a researcher can propose A/B tests to run within ASSISTments.  That is, a researcher may specify two contrasting conditions, such as video- or text-based instructional feedback, and a particular homework topic, such as ``Adding Whole Numbers,'' or ``Factoring Quadratic Equations.'' Then, students working on that topic are individually randomized between the two conditions. The researcher could then compare the relative impact of video- vs.\ text-based feedback on an outcome variable of interest such as homework completion.  The anonymized data associated with the study, consisting of several levels of granularity and rich covariates describing both historical pre-study and within-study student interaction, is made available to the researcher.  The TestBed currently hosts over 100 such RCTs, and several of these RCTs have recently been analyzed, e.g., \citep{fyfe2016providing, walkington2019effect,prihar2022exploring,vanacore2023impact,gurung2023identification,gurung2023how}. 

In the ASSISTments TestBed example, a given RCT is likely to consist of just a few hundred students assigned to a specific homework assignment, limiting statistical power and precision. For instance, in one typical ASSISTments TestBed A/B test, a total of 294 students were randomized between two conditions, leading to a standard error of roughly four percentage points when estimating the effect on homework completion.  This standard error is too large to either determine the direction of a treatment effect or rule out clinically meaningful effect sizes. But the ASSISTments database contains data on hundreds of thousands of other  ASSISTments users, many of whom may have completed similar homework assignments, or who may have even completed an identical homework assignment but in a previous time period.  
 
This paper outlines an approach to estimate treatment effects in an RCT while incorporating high-dimensional covariate data, large observational remnant data, and machine learning prediction algorithms to improve precision. It does so without compromising the accuracy guarantees of traditional design-based RCT estimators, yielding unbiased point estimates and sampling variance estimates that are conservative in expectation; the approach is design-based, relying only on the randomization within the RCT to make these guarantees. In particular, the method prevents ``bias leakage'': bias that might have occurred due to differences between the remnant and the experimental sample, biased or incorrect modeling of covariates, or other data analysis flaws, does not leak into the RCT estimator. We combine recent causal methods for within-RCT covariate adjustment with other methods that have sought to incorporate high dimensional remnant data into RCT estimators.  In particular, we focus on the challenge of precisely estimating treatment effects from a set of 33 TestBed experiments \citep{data}, using prior log data from experimental participants and non-participants in the ASSISTments system.

The nexus of machine learning and causal inference has recently
experienced rapid and exciting development. This has included novel
methods to analyze observational studies,
e.g., \citep{diamond2013genetic}, to estimate subgroup effects,
e.g., \citep{%bothways,
kunzel2018transfer}, or to optimally allocate treatment, e.g., \cite{uplift}.   Other developments share our goal, i.e., improving the precision of average treatment effect estimates from RCTs. These include the flexible approaches of 
\cite{aronowMiddleton,wager2016high,chernozhukov2018double}, 
all of which can incorporate arbitrary prediction methods, \cite{bloniarz2016lasso}, which uses the Lasso regression estimator to analyze experiments, and the Targeted Learning framework \citep{rosenblum2010simple,van2011targeted},
 which combines ensemble machine learning with semiparametric maximum likelihood estimation. 

A large literature has explored the possibility of improving precision in RCTs by pooling the controls in the RCT with  historical controls from observational datasets or from other similar RCTs.  This literature dates back at least to \cite{pocock1976combination}; for a review see \cite{viele2014use}. 
Much of this work uses a Bayesian framework, although frequentist approaches exist as well \citep{yuan2019design}.  In many of these methods biases can be arbitrary large depending on the choice of historical controls.  
Other recent efforts have sought to improve precision in RCT estimates by using the results of separate models fit on observational data.  
These include 
\cite{deng2013improving}, which fits a covariate model to pre-experimental data and then uses it to reduce standard errors of online A/B tests; 
\cite{gui2020combining}, which uses the RCT to de-bias a broken IV estimate obtained from observational data and then further combines this with an independent RCT-based estimate; and
\cite{opper2021improving}, which develops a variant of the sample-splitting estimator that we review below, and suggests a role for auxiliary data as well. 

Other literature has sought to combine effect estimates from experimental and observational studies, often under the framework of ``data fusion'' \citep{bareinboim2016causal}; these methods require observational data on both treated and untreated subjects. In addition to variance reduction, these methods may also seek to
generalize the results of RCTs to other populations or other outcome variables, improve the design of RCTs, detect problems in observational studies, or accomplish other goals
\citep{
hartman2015sate,
%rothenhausler2020model,
athey2020combining, 
rosenman2021designing,%}. %,
rosenman2020combining,
rosenman2022propensity, 
chen2021minimax,
kallus2018removing}.
%review
For recent reviews, see \cite{degtiar2021review, colnet2020causal}. 

% sampling
A parallel literature in survey methodology discusses the possibility of combining probability and nonprobability samples in order to increase precision, especially for small area estimation
\citep{
%cassel1976some, 
% Randal2003model, 
% breidt2005model, 
% opsomer2007model, 
% wang2011nonparametric, 
% mcconville2013survey, 
% goga2014efficient, 
 breidt2017model, 
% erciulescu2019model, 
erciulescu2020statistical,
dagdoug2021model, 
mcconville2020tutorial}.

In this paper, our goal is to estimate the average treatment effect within the RCT, and our focus is on using observational data---non-randomized subjects in the control or treatment conditions, or both, or neither---to improve the precision of the estimate. The main idea is to use observational data to train an algorithm that predicts RCT outcomes, and use the resulting predictions in the randomized sample as a new covariate. While this approach will work with any covariate adjustment technique, we suggest an approach based on the principal of ``first, do no harm,'' meaning that we prioritize retaining the advantages of randomized experiments highlighted above.  In particular, we seek to ensure that our method (1) does not introduce any bias, (2) will not harm precision, and ideally will improve precision, and (3) does not require any additional statistical assumptions beyond those typically made in design-based analysis of RCTs.

The paper is organized as follows.  Section \ref{sec:background} reviews background material, including design-based RCT analysis and covariate adjustment.  Section \ref{sec:theMethod} discusses incorporating remnant data, and presents our main methodological contribution.  In Section \ref{sec:assistments} we apply the method to estimate treatment effects in 33 TestBed experiments.  Section \ref{sec:conclusion} concludes.

\section{Methodological Background} \label{sec:background}
\subsection{Causal Inference from Experiments}\label{sec:causal-inference-from-experiments}

Consider a randomized experiment to estimate the average effect of a binary treatment $T$ on an outcome $Y$.  There are $N$ subjects, indexed by $i=1,\dots, N$.  Let $T_i = 1$ if subject $i$ is assigned to treatment, and $T_i = 0$ if control.  Let $\tg = \{i \mid T_i = 1\}$ and $\cg = \{i \mid T_i = 0\}$, and let $\nt = |\tg|$ and $\nc = |\cg|$.

Following \citet{neyman:1923} and \citet{rubin1974estimating}, let potential outcomes $\yti$ and $\yci$ represent the outcome value $Y_i$ that $i$ would have exhibited if he or she had (perhaps counterfactually) been assigned to treatment or control, respectively. We model the potential outcomes as fixed (not random).  \label{text:fixedpo}  Observed outcomes are a function of treatment assignment and potential outcomes:
\begin{equation*}
  Y_i=T_i\yti+(1-T_i)\yci
\end{equation*}
Define the treatment effect for $i$ as $\tau_i=\yti-\yci$. Our goal will be to estimate the average treatment effect (ATE), $\tbar \equiv \sum_i \tau_i / N = \ytbar - \ycbar$, where $\ytbar = \sum_{i=1}^N{\yti}/N$ is the mean of $\yt$ over all $N$ units in the experiment and $\ycbar$ is defined similarly.

If both $\yci $ and $\yti $ were known for each subject $i$, statistical modeling would be unnecessary---researchers could calculate $\tbar $ exactly, without error, by simply averaging observed $\tau$. In practice, we never observe both $\yci$ and $\yti$. Instead, we rely on the experimental setup to estimate and infer causation. Since the treatment and control groups are each random samples of the $N$ participants, survey sampling literature provides design-based unbiased estimators of $\ytbar$ and $\ycbar$ based on observed $Y$ and the known distribution of $T$. These estimators, and their associated inference, depend only on the experimental design, and not on modeling assumptions. The survey sample structure of randomized experiments allows us to infer counterfactual potential outcomes (at least on average) and estimate $\tbar$ as if $\tau_i$ were available for each $i$, albeit with sampling error.

We will use this framework to analyze the 33 TestBed experiments. These experiments are examples of ``Bernoulli experiments,'' in which each $T_i$ is an independent Bernoulli trial: $\Pr(T_i=1) = p$, with $0<p<1$, and $T_i\independent T_j \text{ if } i\ne j$. In the TestBed experiments, $p=1/2$. Estimation and inference about $\tbar$ is based on the observed values of $Y$ and $T$, and the known value of $p$. 

We will now introduce some statistical elements that we will use as the ingredients for our approach. Let $M_i=T_i\yci+(1-T_i)\yti$ denote $i$'s unobserved counterfactual outcome---when $i$ is treated, $M_i=\yci$ and when $i$ is in the control condition $M_i=\yti$. Then $i$'s treatment effect may be expressed as $\tau_i = (-1)^{T_i}(M_i - Y_i)$, i.e., $\tau_i=M_i-Y_i$ if $i$ is in the control group, or $\tau_i=Y_i-M_i$ if $i$ is in the treatment group. Although $M_i$ is, by definition, unobserved, it plays a central role in causal inference; its expectation,
\begin{equation*}
m_i\equiv \EE M_i=p\yci+(1-p)\yti
\end{equation*}
will also play a prominent role. Note that $m_i$ is a weighted average of subject $i$'s potential outcomes.

Let
\begin{equation*}
U_i=\begin{cases}
\frac{1}{p} & T_i=1\\
-\frac{1}{1-p} & T_i=0
\end{cases}
\end{equation*}
be subject $i$'s signed inverse probability weights; $U_i$ is merely a rescaled treatment indicator. Note that $\EE U_i=0$, and $\EE U_iY_i=\tau_i$. To see the latter, note that when $T=1$, with probability $p$, $Y_i=\yti$ and $U_iY_i=\yti/p$; when $T=0$, with probability $1-p$, $U_iY_i=-\yci/(1-p)$. Thus $U_iY_i$ may be thought of as an unbiased estimate of $\tau_i$, and $\thipw \equiv \sum_iU_iY_i/N$ is an unbiased estimate of $\bar{\tau}$. Note $\thipw$ is identical to the ``Horvitz-Thompson'' estimator of  \cite{aronowMiddleton}
\begin{equation}\label{eq:ipw}
\thipw=\frac{1}{N}\displaystyle\sum_{i\in\mathcal{T}}
\frac{Y_i}{p}-\frac{1}{N}\displaystyle\sum_{i\in\mathcal{C}}\frac{Y_i}{1-p}
\end{equation}
since it is the difference between the Horvitz-Thomson estimates of $\ytbar$ and $\ycbar$ \citep{horvitzThompson}.

The sampling variance of $\thipw$ proceeds from the same principals. The variance of $U_iY_i$ is 
\begin{equation}\label{eq:varTauHat1}
\var(U_iY_i)=\left(\yti\sqrt{\frac{1-p}{p}}+\yci\sqrt{\frac{p}{1-p}}\right)^2=\frac{m_i^2}{p(1-p)}
\end{equation}
and $\var(\thipw) = \sum_i m_i^2/[N^2 p(1-p)]$ because treatment assignments are independent. Note that because $\yti $ and $\yci $ are never simultaneously observed, $\var(\thipw)$ is not identified. However, $\hat{\var}(\thipw)=\sum_iU_i^2Y_i^2/N^2$ is an upper bound, i.e.,  $\EE\hat{\var}(\thipw)\ge \var(\thipw)$. (See \cite{aronowMiddleton} for equivalent expressions for more general experimental designs.)

Strangely, $\thipw$ and $\var(\thipw)$ are not translation-independent, i.e., adding a constant to each $Y$ changes both the value of $\thipw$ and $\var(\thipw)$ without changing the estimand $\bar{\tau}$. The more popular simple ``difference-in-means'' estimator \citep{neyman:1923},
\begin{equation}\label{eq:tauSD}
\tsd = \frac{1}{\nt}\sum_{i \in \tg} Y_i - \frac{1}{\nc}\sum_{i \in \cg} Y_i =\bar{Y}_{\tg}-\bar{Y}_{\cg}
\end{equation}
and its associated variance estimator 
\begin{equation} \label{eq:tsdvarhat}
\varhat(\tsd) = \frac{\syc}{\nc} + \frac{\syt}{\nt}
\end{equation}
where $\syc = \sum_{i \in\cg}(Y_i-\bar{Y}_{\cg})^2/(\nc-1)$ is the sample variance of the control group and $\syt$ is defined similarly, do not have this undesirable property. Our presentation here focuses on $\thipw$ as a jumping-off point for subsequent methodological development, but $\tsd$ will also play a prominent role.

\subsection{Design-Based Covariate Adjustment}\label{sec:cov-adj}

The reason for error when estimating $\tau$ is our inability to observe counterfactual potential outcomes $M$. As we have seen, randomized trials, coupled with design-based estimators like $\thipw$, use comparison groups and survey sampling theory to implicitly fill in this missing information. Baseline covariates---a vector $\bm{x}_i$ of data for subject $i$ gathered prior to treatment randomization---may potentially help us improve upon this strategy. Suppose a researcher has constructed algorithms $\predc$ and $\predt$ designed to impute $\yc$ and $\yt$, respectively, from $\bm{x}$. Then $\hat{M}_i = T_i \predcx + (1-T_i)\predtx$ is an imputation of $i$'s missing counterfactual outcome, and the researcher may estimate $\tau_i$ as $(-1)^{T_i}(\hat{M}_i-Y_i)$. In general, the bias of algorithms such as $\predc$ and $\predt$ will be unknown without further assumptions, so these effect estimates may be inadvisable. On the other hand, imperfect or potentially biased imputations of potential outcomes can, \emph{when combined with randomization}, yield substantial benefits.

The approach we will take to combining covariate adjustment with randomization has antecedents in 
\cite{robins1994estimation,scharfstein1999rejoinder,robins2000robust,rosenbaum:2002a,bang2005doubly,van2006targeted,tsiatis2008covariate,moore2009covariate,van2011targeted,aronowMiddleton,belloni2014inference,wager2016high,chernozhukov2018double,loop}, among others.  
We will focus on exactly unbiased estimators, despite the fact that a small amount of bias in finite sample is often acceptable, especially in the presence of other considerations. 
In fact, the covariate adjustment techniques we will develop have advantageous properties beyond unbiasedness (see, e.g. Section \ref{sec:ancova}).
That said, our main methodological contributions (in Section \ref{sec:theMethod}) are compatible with alternative techniques, including those that may be biased in finite samples.
We will frame our arguments around bias since we find it to be the easiest way to formalize confounding, which we see as the most pressing threat to estimators that include observational data.\label{text:unbiasedness}

In a Bernoulli experiment, note that
\begin{align*}
U_i(Y_i-m_i)
&=\begin{cases}
\frac{1}{p}(\yti-p\yci-(1-p)\yti)&T_i=1\\
-\frac{1}{1-p}(\yci-p\yci-(1-p)\yti)&T_i=0 \end{cases}\\
&=\begin{cases}
\frac{p(\yti-\yci)}{p}&T_i=1\\
\frac{(1-p)(\yti-\yci)}{1-p}&T_i=0
\end{cases}\\
&=\tau_i
\end{align*}
and this therefore suggests using imputations $\predcx$ and $\predtx$ to estimate $m_i$ as $\hmi = p\predcx + (1-p)\predtx$, and then estimating $\tau_i$ as
\begin{equation*}
\thim\equiv U_i(Y_i-\hmi).
\end{equation*}
For $\thim$ to be unbiased it is sufficient that algorithms $\predc$ and $\predt$ are constructed in such a way that
\begin{equation}\label{eq:indPred}
\{\predcx,\predtx\}\independent T_i.
\end{equation}
Since by design the distribution of $T_i$ does not depend on $\bx_i$,\label{text:x-independent} (\ref{eq:indPred}) is tantamount to requiring that $T_i$, and variables such as $Y_i$ that depend on $T_i$, play no role in constructing algorithms $\predc$ and $\predt$. Then, under (\ref{eq:indPred}), 
\begin{equation*}
\EE (\thim) = \EE (U_iY_i) - \EE (U_i\hmi) = \EE (U_iY_i) - \EE (U_i) \EE(\hmi) = \EE (U_iY_i) = \tau_i
\end{equation*}
where we use the facts that $\EE (U_i) = 0$ and $\EE (U_iY_i) = \tau_i$. Finally, define the ATE estimate:
\begin{equation} \label{eq:thm}
\thm=\frac{1}{N}\displaystyle\sum_{i=1}^N \thim = 
\frac{1}{N}\sum_{i \in \tg} \frac{Y_i-\hmi}{p} - 
\frac{1}{N}\sum_{i \in \cg} \frac{Y_i-\hmi}{1-p} 
\end{equation}
The unbiasedness of $\thm$ for $\bar{\tau}$ follows from the unbiasedness of each of its summands, $\thim$ for $\tau_i$.\label{text:biasExample}

Crucially, this unbiasedness holds even if $\predcx$ and $\predtx$ are biased; algorithms $\predc$ and $\predt$ need not be unbiased, consistent, or correct in any sense. As long as $\predcx$ and $\predtx$ are constructed to be independent of $T_i$, then $\thim$ will be unbiased. The same cannot be said for regression-based covariate adjustment, the common technique of regressing $Y$ on $T$ and $\bm{x}$ \citep{freedman2008regression}.

The estimate $\thm$ given in \eqref{eq:thm} is identical to the ``augmented IPW'' (AIPW) estimate familiar from the double-robustness literature in observational studies \citep[e.g.,][]{bang2005doubly}, but with known propensity scores $p$ \citep[see, e.g.][]{hahn1998role,rothe2016value}. Though AIPW estimators are typically derived in a model-based framework, the previous results show that in the context of an RCT, provided \eqref{eq:indPred} holds, the AIPW estimator \eqref{eq:thm} is unbiased under a design-based framework as well.\label{text:aipw} 

Compare $\thm$ to the estimate $\thipw$ given in \eqref{eq:ipw}. The only difference is that $Y_i$ in \eqref{eq:thm} has been replaced by $Y_i - \hmi$ in \eqref{eq:ipw}.  The goal of this covariate adjustment is to improve precision---we are residualizing our outcomes, in effect, to reduce variation.  Its success in this regard depends on the predictive accuracy of $\predcx$ and $\predtx$.  The variance of $\thim$ depends on $\hmi$ and is given by
\begin{equation}\label{eq:vthim}
\var(\thim \mid \hmi)=\frac{(\hmi-m_i)^2}{p(1-p)}.
\end{equation}
Compared with \eqref{eq:varTauHat1}, \eqref{eq:vthim} replaces $m_i$ with $\hmi-m_i$---that is, replaces potential outcomes with their residuals. Accurate imputations of $\yci$ and $\yti$, and hence of $\hmi$, yield precise estimation of $\tau_i$. On the other hand, inaccurate imputations, i.e., when $(\hmi-m_i)^2$ is greater than $m_i^2$, will decrease precision---though, again, without causing bias.
The sampling variance of the full estimator $\thm$ depends on how the parameters of $\predc$ and $\predt$ are estimated, which may induce dependence between $\hat{\tau}_i$ and $\hat{\tau}_j$ for $i\ne j$. 
The most important case, for our purposes, is discussed in the next section.\label{text:var-thm} 

\subsection{Sample Splitting}\label{sec:loop}

Successful covariate adjustment requires imputations $\predcx$ and $\predtx$ that are accurate and independent of $T_i$. To satisfy the independence condition, $i$'s observed outcome $Y_i$, which is a function of $T_i$, cannot play a role in the construction of the algorithms $\predc$ and $\predt$; they must be trained using other data.

This may be achieved by sample splitting, also referred to in this context as cross-estimation or cross-fitting.  In a Bernoulli experiment, rather than fitting global imputation algorithms $\predt$ and $\predc$ (which would violate \ref{eq:indPred}), fit a separate set of imputation models $\predti$ and $\predci$ for each experimental participant $i$, using data from the other participants.  In other words, for each $i$, one first drops observation $i$, and then use the remaining $N-1$ observations to construct imputation models for the control and treatment potential outcomes, denoted $\predci$ and $\predti$, respectively.  These models may be fit by any method, for example linear regression or random forests \citep{breiman2001random} (which, conveniently, automatically provides out-of-bag predictions for each subject).  In particular, methods that allow for regularization to prevent overfitting may be used. (For a discussion of sample-splitting for AIPW estimation, see, e.g. \cite{chernozhukov2018double,jiang2022new,smucler2019unifying}.)\label{text:cross-fit-aipw}

In this leave-one-out context,
\begin{equation*}
\hat{m}_i = p\predcxi + (1-p)\predtxi
\end{equation*}
and the estimated average treatment effect is then again given by $\tss = \sum_i \thim / N$ as in \eqref{eq:thm}, and where the superscript denotes ``sample splitting.''   Note that in a Bernoulli experiment $\hat{m}_i \independent T_i$ due to the fact that $\hat{m}_i$ is computed using $\bm{x}_i$ and a model fit without using observation $i$.  It follows that $\tss$ is unbiased.  Other randomization designs would call for modifications to the algorithm,  %\citep[e.g.][]{ploop}. 
e.g., \citep{ploop}.

When we wish to explicitly specify the covariates and imputation method that are used within $\tss$ we will write $\tss[\textrm{covariates}; \textrm{imputation method}]$.  For example, if we wished to use random forests and all available covariates we would write $\tss[\bx; \textrm{RF}]$, or if we wished to use only the fourth covariate and ordinary least squares regression we would write $\tss[x_4; \textrm{LS}]$.  If we wished to ignore the covariates and always set $\hat{m}_i = 0$ we would write $\tss[\varnothing; 0]$.  Note in particular that $\tss[\varnothing; 0] = \thipw$.

Building upon \eqref{eq:vthim}, and following \cite{loop}, the variance of $\tss$ may be estimated as follows.  Let
\begin{equation} \label{Echat}
\Ech = \frac{1}{\nc}\sum_{i \in \cg}\left[\predcxi - \yci\right]^2 
\end{equation}
be the mean-squared-error of control imputations $\predcx$ with respect to potential outcomes $\yc$, and define $\Eth$ similarly.  Note $\Ech$ and $\Eth$ are leave-one-out cross validation mean squared errors.  The estimated variance is then given by
\begin{align}
\hat{\var}(\tss) =  \frac{1}{N}\left[\frac{p}{1-p}\Ech + \frac{1-p}{p}\Eth  + 2\sqrt{\Ech \Eth }\right]. \label{loopvarhat}
\end{align}
This variance estimate will typically be somewhat conservative.  This is due to the fact that $\var(\tss)$ is unidentifiable, because the correlation of the potential outcomes is not estimable, and instead an upper bound is used \cite{loop}.  This difficulty is not unique to $\tss$; as noted in Section \ref{sec:causal-inference-from-experiments}, similar comments apply to $\thipw$, and the same is true of $\tsd$ as well \citep{neyman:1923,aronow2014}.  

Note that by \eqref{loopvarhat}, %simple algebraic inequality
\begin{align} 
\hat{\var}(\tss)
&\le
\frac{\Ech }{N(1-p)} + \frac{\Eth }{Np} \nonumber \\
&\approx
\frac{\Ech }{\nc} + \frac{\Eth }{\nt} \label{t-test-like-variance}
\end{align}
which is similar in form to the variance estimate typically used in a two-sample $t$-test, namely $\frac{\syc}{\nc} + \frac{\syt}{\nt}$.  In (\ref{t-test-like-variance}), $\syc$ and $\syt$ are replaced by $\Ech $ and $\Eth $.  In other words, the sample variances are replaced by the estimated mean squared errors of the imputations.  

A special case occurs when the potential outcomes are imputed by simply taking the mean of the observed  outcomes (after dropping observation $i$).  That is, we set
\begin{equation}
\predcxi = \frac{1}{|{\cg \setminus i}|}\sum_{j \in \cg \setminus i} \ycj \label{meanimpute}
\end{equation}
and similarly for $\predtxi$.  Note that the covariates are simply ignored, and we denote this special case by $\tss[\varnothing; \textrm{mean}]$.  It can be shown that $\tss[\varnothing; \textrm{mean}] = \tsd$, i.e., the sample splitting estimator using leave-one-out mean imputation is exactly equal to the simple difference-in-means estimator.  Moreover, in this special case $\Ech = \frac{\nc}{\nc-1}\syc$ and $\Eth = \frac{\nt}{\nt-1}\syt$ and thus the variance estimate given by (\ref{t-test-like-variance}) is nearly identical to the ordinary $t$-test variance estimate \citep{loop}.

In short, when using mean imputation for the potential outcomes, the leave-one-out sample splitting procedure essentially simplifies to a standard $t$-test.  The effect estimate is identical, and the variance estimate is nearly identical.\footnote{These statements are conditional on $n_c \ge 2$ and $n_t \ge 2$.  When $n_c < 2$, then $\syc$ and the expression in (\ref{meanimpute}) are not defined.  When $n_c = 0$, $\bar{Y}_\mathcal{C}$ and $\tsd$ are also undefined.  More generally, several of our estimators are undefined when $n_c = 0$, namely $\tsd$ defined in (\ref{eq:tauSD}), $\Ech$ defined in (\ref{Echat}), as well as $\trebar$ in  (\ref{eq:taurebar}) and $\tgr$ in (\ref{eq:generalizedrebar}) defined in the next section.  Thus, it is worth noting that when we assert $\tsd$ is unbiased, we implicitly condition on $n_c, n_t > 0$.  (It is well known that $\tsd$ is unbiased conditional on any $n_c, n_t$, so long as $n_c, n_t > 0$. Without conditioning on $n_c$ and $n_t$, the moments of $\tsd$ are undefined in a Bernoulli trial.  See, e.g., \cite{freedman2007statistics}.)  The same applies to $\trebar$ and $\tgr$ in the next section.  For $\tss$ we do not implicitly condition $n_c, n_t > 0$ but rather assume that $\hat{m}_i$ is defined for all possible randomizations, including those in which $n_c < 2$ or $n_t < 2$.  This may be accomplished, for example, by setting $\hat{m}_i = 0$ in cases where $\predcxi$ or $\predtxi$ are otherwise undefined, in which case $\hat{\tau}_i$ reverts to the Horvitz-Thompson estimator.  As for $\Ech$ defined in (\ref{Echat}), we note that we could alternatively replace the $n_c$ in the denominator with $N(1-p)$, in which case $\Ech$ would be an unbiased estimate of $\frac{1}{N}\sum_{i=1}^N\textrm{MSE}[\predcxi]$.  In practice, we prefer to divide by $n_c$, although, unlike $\tsd$, we cannot claim that $\Ech$ is unbiased conditional $n_c, n_t > 0$.  See \cite{loop}.
}
This is highly reassuring.  Any imputation strategy that improves upon mean imputation in terms of mean squared error will reduce the variance of $\tss$ relative to $\tsd$.  Most modern machine learning methods employ some form of regularization to guard against overfitting, and thus typically perform no worse, or at least not substantially worse, than mean-imputation.  Thus in practice there is relatively little risk of hurting precision.\footnote{Beyond the question of \emph{hurting} precision, one might reasonably ask---as an anonymous reviewer did---whether, or in what sense, $\tss$ is optimal. 
Since $\tss$ is a version of the AIPW estimator, we may refer to the extensive literature on its optimality. For example, \cite{van2006targeted} gives a set of conditions under which AIPW is efficient or locally efficient, \cite{rothe2016value} discusses the case of a known propensity score, and \cite{chernozhukov2018double,jiang2022new}  discuss the sample-splitting AIPW estimator.
In general, the theoretical literature surrounding AIPW tends take potential outcomes as random, whereas in our development they are fixed; we defer an examination of the consequences of that distinction for future research.}

\section{Incorporating Observational Data} \label{sec:theMethod}
Modern field trials are often conducted within a very data-rich context, in which rich high-dimensional covariate data is automatically, or already, collected for all experiment participants. For instance, in the TestBed experiments, system administrators have access to log data for every problem and skill builder each participating student worked before the onset of the experiment. In other contexts, such as healthcare or education, rich administrative data is often available. In fact, these covariates are available for a much wider population than just the experimental participants---in the TestBed case, there is log data for all ASSISTments users. In other education or healthcare examples, administrative data is often available for every student or patient in the system, not just for those who were randomized to a treatment or control condition. Often, as in the TestBed case, the outcome variable $Y$ is also drawn from administrative or log data. We refer to subjects within the same data system in which the experiment took place---i.e. for whom covariate and outcome data are available---but who were not part of the experiment, as the ``remnant'' from the experiment. The remnant from a TestBed experiment consists of all ASSISTments users for whom log data is available but who did not participate in the experiment, of whom there are several hundred thousand. 

Simply pooling data from the remnant with data from the experiment undermines the randomization, since students in the remnant were not randomized between conditions.
This section will describe an alternative approach---a set of unbiased effect estimators that use the remnant to improve precision.
The estimators all begin by using the remnant to fit or train a model predicting potential outcomes as a function of covariates, and using that model to impute potential outcomes for units in the experiment.
They differ in how they use those imputations, and build on each other.
The following subsection discusses a simple residualizing estimator, Section \ref{sec:reloop} discusses sample splitting to improve that estimator, and Section \ref{sec:reloopplus} discusses incorporating an additional set of covariate-adjustment models fit to data from the experimental subjects themselves. 

We will focus on the case in which the treatment condition in the remnant is constant, irrelevant, or just unobserved.
For instance, in the TestBed dataset the RCTs typically test an experimental intervention against ``business as usual,''  and subjects in the remnant were all exposed to the control condition. 
Extension to cases in which $T$ is observed in the remnant is straightforward, and will be discussed briefly in Section \ref{sec:conclusion}.\label{text:noTinRemnant}

%%%%%%%%%%%%%%%%%%%%%%%%%%%%%%%%%%%%%%%%%%%%%%
%%%%%%%%%%%%%%%%%%%%%%%%%%%%%%%%%%%%%%%%%%%%%%
\subsection{Covariate Adjustment Using the Remnant}\label{sec:intro.remnant}
%%%%%%%%%%%%%%%%%%%%%%%%%%%%%%%%%%%%%%%%%%%%%%
%%%%%%%%%%%%%%%%%%%%%%%%%%%%%%%%%%%%%%%%%%%%%%

Design based covariate adjustment requires imputation models $\predc$ and $\predt$; \cite{aronowMiddleton} suggests training those models using ``auxiliary data'' such as the remnant. 
In the TestBed, there is no basis for separate imputation of $\yc$ and $\yt$; instead, we use data from the remnant to train an algorithm $\predrfun$ to predict (generic) outcomes as a function of covariates. 
In some cases $\predrfun$ may be interpreted as predicting control outcomes, but in other cases the interpretation may be more opaque.

Regardless of the interpretation, the logic of Section \ref{sec:cov-adj} would suggest using $\predrfun$ to construct the estimator $\thm$ \eqref{eq:thm}, by setting $\hmi=\predrxi$, where $\predrxi$, $i=1,\dots N$ denotes predictions obtained by applying $\predrfun$ to members of the RCT.\label{text:aipw-remnant}\footnote{This estimator was also suggested by an anonymous reviewer.} 
This estimator is equivalent to the IPW estimator $\thipw$ \eqref{eq:ipw}, but with observed outcomes $Y$ replaced by residuals $R_i \equiv Y_i-\predrxi$, that is, $\sum_i U_iR_i/N$.
Along similar lines, \cite{rebarEDM} proposes conditioning on $\nc$ and $\nt$ and using a difference in means estimator (also see \cite{deng2013improving} for a similar suggestion):
\begin{equation}\label{eq:taurebar}
\trebar = \frac{1}{\nt}\sum_{i \in \tg} R_i - \frac{1}{\nc}\sum_{i \in \cg} R_i =\bar{R}_{\tg}-\bar{R}_{\cg}
\end{equation}
In what follows we will refer specifically to \eqref{eq:taurebar} as ``the remnant estimator.''  

The remnant estimator $\trebar$ and its IPW variant work because $R_i$ is itself an outcome variable, with its own potential outcomes $\rci=\yci-\predrxi$ and $\rti=\yti-\predrxi$, and because $\predrxi$ is invariant to treatment assignment. Thus, treatment effects on the original outcomes are equal to treatment effects on the residualized outcomes, i.e.,
\begin{equation*}
\rti - \rci = \left[\yti - \predrxi \right] - \left[\yc-\predrxi \right] = \yti - \yci = \tau_i.
\end{equation*}
and $\trebar$---a difference-in-means estimate of this effect---is therefore an unbiased estimate of $\bar{\tau}$.  
This property holds regardless of whether $\predrfun$ itself is unbiased, consistent, or ``correct'' in any sense; indeed, as suggested above, it may not even be clear precisely what $\predrfun$ is estimating.

The goal of residualization is to improve precision.
Since $\trebar$ is a difference-in-means estimator, its sampling variance can be conservatively estimated in a similar way as $\tsd$ \eqref{eq:tsdvarhat}, but, again, with $R$ replacing $Y$: 
\begin{equation} \label{eq:trebarvarhat}
\varhat(\trebar) = \frac{\src}{\nc} + \frac{\srt}{\nt}
\end{equation}
Comparing this expression to $\varhat(\tsd)$ given in \eqref{eq:tsdvarhat}, we see that the residualized estimator will have a lower variance than $\tsd$ if 
$\src < \syc$ and $\srt < \syt$.  In other words, we wish for $\predrx$ to capture at least some of the variation in $Y$, so that $R$ is less variable than $Y$.  This will be achieved in practice when $\predrfun$ does indeed successfully predict outcomes in the RCT---or, more generally, when the sample covariances between $\hat{y}^r$ and $Y$ for subjects with $T=0$ and $T=1$, respectively, are sufficiently large. 

Importantly for practitioners, as long as only remnant data is used, $\predrfun$ may be trained and assessed in any way.  This process can be iterative, so that an analyst may train a candidate model, assess its performance (perhaps with $k-$fold cross-validation), modify the algorithm, and repeat until achieving suitable performance.  Any modeling approach may be taken, so long as no data from the RCT is used.  Post-selection inference, which would be a serious concern if model selection were done used the RCT data (especially when the dimension of $\bm{x}$ is large and the sample size is small), does not apply here. 

Unfortunately, in some cases (see, e.g., Section \ref{sec:assistments}) the remnant estimator may have greater sampling variance than the $\tsd$. This will be the case if $\predrfun$, trained in the remnant, extrapolates poorly to the experimental sample---for instance, if the distribution of $Y$ conditional on $\bm{x}$ differs substantially between the remnant and the RCT.  
To make matters worse, the performance of $\predrfun$ in the experimental sample---where it counts---may not be checked directly to select a best model, since when fitting $\predrfun$ outcomes from the RCT can not be touched.\footnote{However, one may use covariate data from the RCT to anticipate $\predrfun$'s performance; Appendix \ref{sec:rem-exp-diff} describes our (unfortunately unsuccessful) attempt to do so. Future research may result in improved methods.}\label{text:footnoteComparingRemRct}

Thus, residualizing with $\predcxi$---i.e., replacing $Y$ with $R$ in an unbiased estimator of $\bar{\tau}$---will result in an unbiased, design-based estimator that may be substantially more precise than $\tsd$, but may also be less precise.
In other words, covariate adjustment using the remnant in this way is potentially fruitful, but risky. \label{text:biasExample2}

%%%%%%%%%%%%%%%%%%%%%%%%%%%%%%%%%%%%%%%%%%%%%%
%%%%%%%%%%%%%%%%%%%%%%%%%%%%%%%%%%%%%%%%%%%%%%
\subsection{Flexibly Incorporating Remnant-Based Imputations}\label{sec:reloop}
%%%%%%%%%%%%%%%%%%%%%%%%%%%%%%%%%%%%%%%%%%%%%%
%%%%%%%%%%%%%%%%%%%%%%%%%%%%%%%%%%%%%%%%%%%%%%

Consider a ``generalized remnant estimator''

\begin{equation} \label{eq:generalizedrebar}
\tgr \equiv \frac{1}{\nt}\sum_{i \in \tg} \left[Y_i - b\predrxi\right] - \frac{1}{\nc}\sum_{i \in \cg} \left[Y_i - b\predrxi\right]
\end{equation}
where $b$ is some prespecified constant. Note that in the special case $b = 1$ this is the remnant estimator $\trebar$, and in the special case $b = 0$ it is the simple difference-in-means $\tsd$.  Thus, following the discussion above, when $\predrfun$ extrapolates well to the RCT, we wish to set $b=1$, and when $\predrfun$ extrapolates poorly to the RCT, we wish to set $b=0$.  More typically, an intermediate value for $b$ may be optimal.  

The challenge is that we do not know \textit{a priori} how well $\predrfun$ extrapolates to the RCT, and therefore do not know the optimal choice for $b$.  
We will use sample splitting to overcome that challenge. 
First define $\predr \equiv \predrx$.  
That is, we compute the remnant-based predictions of RCT outcomes as above (i.e., $\predrx$), but now regard these predictions simply as a covariate to be used within the sample splitting estimator (i.e., $\predr$).  
Then we construct a sample splitting estimator $\tss$ using the following imputation method:
\begin{equation}
    \begin{split}
\predcxri =& \aci + \bci \predri\\
\predtxri =&\ati + \bti \predri\label{CLSdef1} 
\end{split}
\end{equation}

where we obtain $\aci$, $\bci$, $\ati$, and $\bti$ by ordinary least squares, i.e., let
\begin{align}
(\aci, \bci) &= \argmin_{(a,b)} \sum_{j \in \mathcal{C} \setminus i}\left[Y_j - \left(a + b\predrj \right) \right]^2 \label{LSdef}
\end{align}
and similarly for $(\ati, \bti)$. We denote the resulting estimator $\trc$.

The estimator $\trc$ will typically be preferable to the remnant estimator $\trebar$ because, for each observation $i$, the remaining $N-1$ observations of the RCT help determine the best use of $\predr_i$ in constructing $\hat{m}_i$.  For example, suppose that the $\predr$ are highly accurate imputations of the $\yc$ in the RCT.  In this case, we might expect $\aci \approx 0$ and $\bci \approx 1$ so that $\predcxri \approx \predri$, or in other words, the remnant based predictions would ``pass through'' the linear regression largely unmodified, so that $\trc \approx \trebar$.  However, in contrast to the remnant estimator, poor imputations $\predr$ will not necessarily harm precision in $\trc$.  Consider the extreme case in which the $\predr$ are pure noise.  We would then expect $\aci \approx \bar{Y}_{\cg \setminus i}$ and $\bci \approx 0$ so that $\predcxri \approx \bar{Y}_{\cg \setminus i}$.  That is, we would revert approximately to mean-imputation, so that $\trc \approx \tsd$.  In other words, the role of $\predr$ may be tempered according to the prediction accuracy of $\predrfun$ in the RCT.  We might therefore expect $\trc$ to nearly always outperform, or at least perform no worse than, $\trebar$ and $\tsd$.  This intuition is formalized in the following proposition:

\begin{restatable}{prop}{propreloopbetter}\label{prop:reloopbetter}
Let  $(\yC{1},\yT{1},\predr_1),\dots,(\yC{N},\yT{N},\predr_N)$ be IID samples from a population in which $\yc$, $\yt$, and $\predr$ have finite fourth moments, and where $-1 < \mathrm{corr}(\yc, \predr) < 1$ and $-1 < \mathrm{corr}(\yt, \predr) < 1$.  Let $b$ be a fixed constant.  Let $\varhat[\tgr]$ denote the estimated variance of $\tgr$, defined analogously to \eqref{eq:tsdvarhat} and \eqref{eq:trebarvarhat}.  Let $\varhat\left\{\trc\right\}$ denote the estimated variance of $\trc$, defined as in \eqref{loopvarhat}. Then as $N \rightarrow \infty$, 
\begin{equation*}
\frac{\varhat\left\{\trc\right\}}{\varhat[\tgr]} \overset{p}{\to} \phi(b) \le 1
\end{equation*}
where $\phi(b)$ is some constant that depends on $b$.
\end{restatable}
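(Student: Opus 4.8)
The plan is to note that both $\varhat\{\trc\}$ and $\varhat[\tgr]$ are of order $1/N$, identify the in-probability limits of $N\,\varhat[\tgr]$ and $N\,\varhat\{\trc\}$ explicitly, and then observe that $\phi(b)$ is the ratio of these limits while the bound $\phi(b)\le 1$ reduces to an elementary inequality. Throughout I work conditionally on the treatment-assignment vector; since each $T_i$ is Bernoulli$(p)$ and independent of $(\yci,\yti,\predr_i)$, the set $\{(\yci,\predr_i):i\in\cg\}$ is an i.i.d.\ sample of size $\nc$ from the law of $(\yc,\predr)$, the set $\{(\yti,\predr_i):i\in\tg\}$ is i.i.d.\ of size $\nt$ from the law of $(\yt,\predr)$, and $\nc/N\pto 1-p$, $\nt/N\pto p$. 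The finite fourth moments license the laws of large numbers used below, and the strict correlation bounds force $\var(\predr)>0$, $\var(\yc)>0$, $\var(\yt)>0$, keeping the population least-squares fits and the limiting denominator well defined and strictly positive.

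For the generalized-remnant side, set $\rgr_i=Y_i-b\predr_i$; then $\tgr$ is a difference in means of $\rgr$, so $N\,\varhat[\tgr]=(N/\nc)\,\srgrc+(N/\nt)\,\srgrt$. On the control group $Y_i=\yci$, so $\srgrc$ is a sample variance of $\yci-b\predr_i$ over an i.i.d.\ sample and converges to $\var(\yc-b\predr)$; similarly $\srgrt\pto\var(\yt-b\predr)$, whence $N\,\varhat[\tgr]\pto L_1(b):=\var(\yc-b\predr)/(1-p)+\var(\yt-b\predr)/p$. For the $\trc$ side, \eqref{loopvarhat} gives $N\,\varhat\{\trc\}=\tfrac{p}{1-p}\Ech+\tfrac{1-p}{p}\Eth+2\sqrt{\Ech\Eth}$, so it remains to find the limits of the leave-one-out cross-validation errors. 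The leave-one-out coefficients $(\aci,\bci)$ from \eqref{LSdef} are ordinary least squares on $\cg\setminus i$; a leverage bound (the largest leverage is $o_p(1)$ under the second-moment assumption) shows they differ uniformly in $i$ by $o_p(1)$ from the single full-control least-squares fit, which itself converges to the population best linear predictor coefficients $(\alpha^c,\beta^c)$ of $\yc$ on $\predr$. Hence $\Ech\pto\EE(\yc-\alpha^c-\beta^c\predr)^2=(1-\mathrm{corr}(\yc,\predr)^2)\var(\yc)=:\sigma_c^2$, and likewise $\Eth\pto\sigma_t^2:=(1-\mathrm{corr}(\yt,\predr)^2)\var(\yt)$, so that $N\,\varhat\{\trc\}\pto L_2:=\tfrac{p}{1-p}\sigma_c^2+\tfrac{1-p}{p}\sigma_t^2+2\sigma_c\sigma_t$, a positive constant not depending on $b$.

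By the continuous mapping theorem $\varhat\{\trc\}/\varhat[\tgr]\pto L_2/L_1(b)=:\phi(b)$, which depends on $b$ only through $L_1$. Because a best linear predictor minimizes mean squared error, $\var(\yc-b\predr)\ge\sigma_c^2$ and $\var(\yt-b\predr)\ge\sigma_t^2$ for every $b$, so $L_1(b)\ge\sigma_c^2/(1-p)+\sigma_t^2/p$. Writing $1/(1-p)=1+p/(1-p)$ and $1/p=1+(1-p)/p$ and then using $\sigma_c^2+\sigma_t^2\ge2\sigma_c\sigma_t$ yields $\sigma_c^2/(1-p)+\sigma_t^2/p\ge\tfrac{p}{1-p}\sigma_c^2+\tfrac{1-p}{p}\sigma_t^2+2\sigma_c\sigma_t=L_2$; chaining the two bounds gives $L_1(b)\ge L_2>0$, hence $\phi(b)\le 1$.

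The only non-routine step is the convergence of $\Ech$ and $\Eth$ to the population least-squares residual variances: one must bound the discrepancy between the $N$ leave-one-out regression fits and the single full-sample fit and verify that, after averaging squared residuals over the group, it is $o_p(1)$. This is where the moment hypotheses genuinely enter; the remaining ingredients are the law of large numbers and the elementary inequality above.
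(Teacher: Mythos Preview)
Your argument is correct and essentially matches the paper's proof. Both proofs reduce the leave-one-out mean squared errors $\Ech,\Eth$ to their full-sample (non-leave-one-out) counterparts via a maximum-leverage $h^*\pto 0$ argument, and both rely on the elementary fact that least squares minimises residual variance, giving $\sigma_c^2\le\var(\yc-b\predr)$ and $\sigma_t^2\le\var(\yt-b\predr)$.

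The only organisational difference is that the paper works termwise---showing $\Ech/\srgrc\pto\phi_c(b)\le 1$ and $\Eth/\srgrt\pto\phi_t(b)\le 1$ separately and then invoking the already-established bound $\varhat(\tss)\le\Ech/[N(1-p)]+\Eth/(Np)$ from \eqref{t-test-like-variance}---whereas you compute the two overall limits $L_1(b)$ and $L_2$ directly and supply the comparison $\sigma_c^2/(1-p)+\sigma_t^2/p\ge L_2$ explicitly (which is precisely what \eqref{t-test-like-variance} encodes). Your route is slightly more self-contained; the paper's is slightly more modular. In substance the two are the same.
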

\begin{proof} See Appendix \ref{sec:propositions}. \end{proof}

Notably, although this proposition is asymptotic in nature, we expect it to be relevant even in relatively small samples, given that $\trc$ effectively only requires estimating two more parameters than $\tgr$ (i.e., the slope coefficients $\bci$ and $\bti$).  
The ASSISTments experiments we analyze in Section \ref{sec:assistments} appear to generally support this intuition; $\trc$ nearly always outperforms $\tsd$.  Indeed, we see the greatest performance gain in the RCT with the smallest sample size.

Importantly, because the $\predr$ are used only as a covariate, they do not necessarily need to accurately impute the potential outcomes in the RCT; rather, it suffices that they are merely predictive.  If the RCT is systematically different from the remnant,  e.g., the potential outcomes in the RCT differ in scale from those in the remnant, the $\predr$ will still be useful as long as they are correlated with the experimental potential outcomes.  Indeed, counterintuitively, it is even possible for $\trc$ to achieve precision gains if the $\predr$ are \emph{anticorrelated} with outcomes in the RCT.

In any event, regardless of the properties of $\predrfun$ or quality of the data in the remnant, $\trc$ remains unbiased, and its associated variance estimator remains conservative, because it relies on $\tss$, which has both of those properties, and because $\predr$ is a covariate, and invariant to treatment assignment.

%%%%%%%%%%%%%%%%%%%%%%%%%%%%%%%%%%%%%%%%%%%%%%
%%%%%%%%%%%%%%%%%%%%%%%%%%%%%%%%%%%%%%%%%%%%%%
\subsection{Combining Remnant-Based and Within-RCT Covariate Adjustment}\label{sec:reloopplus}
%%%%%%%%%%%%%%%%%%%%%%%%%%%%%%%%%%%%%%%%%%%%%%
%%%%%%%%%%%%%%%%%%%%%%%%%%%%%%%%%%%%%%%%%%%%%%

The estimator $\trc$ effectively solves the remnant estimator's main deficiencies.  However, $\trc$ largely neglects the RCT covariates, except to the extent that $\predr$ depends on $\bx$ through $\predrfun$.  Neglecting the RCT covariate data may be suboptimal, especially when $\predrfun$ is poorly predictive of outcomes in the RCT, perhaps due to systematic differences between the RCT and the remnant.  Our goal in this section is to augment the strategy of the previous section, so that the RCT covariate data may be more fully exploited.

Define
\begin{equation}\label{eq:tilde-x}
\xti \equiv (x_{i1}, x_{i2}, ..., x_{ip}, \predri)
\end{equation}
or in other words, $\xti$ is $\bxi$ augmented with $\predr_i$.  We may now compute $\tss$ using the augmented set of covariates $\xt$ instead of $\bx$.  The hope is that by including $\predr$ we can exploit information in the remnant in much the same way that $\trc$ does, while simultaneously performing a more standard within-RCT covariate adjustment.  For example, we might use random forests and compute $\trcprf$.

In general, the precision of the estimator will depend on the performance of the imputation strategy, and in particular, its ability to integrate information from the remnant, via $\predr$, with information from other covariates $\bm{x}$.  On the one hand, $\predr$ is a function of the other covariates and thus, in at least some sense, does not contain any additional information. However, the function $\predrfun$  is fitted on the remnant, which may be much larger than the experimental sample, and thus $\predrfun$ may be a more accurate imputation function than what we would be able to obtain using the RCT data alone.  In this sense, $\predr$ does contain additional information, which can be exploited by the imputation method by heavily weighting $\predr$ over the other covariates. 

On the other hand, if the $\predr$ are highly accurate, using them as a covariate within a nonlinear model like a random forest may be statistically inefficient compared to a linear model, as in $\trc$. Therefore, it may not always be clear whether a highly flexible method such as $\trcprf$ will outperform $\trc$; it depends on the quality of the imputations $\predr$ as well as the predictive power of the covariates in the experimental sample.

This suggests imputing potential outcomes using a specialized ensemble learner \cite{opitz1999popular}: a weighted average of linear regression using just $\predri$, as in $\trc$, and random forests using $\tilde{\bm{x}}$, as in $\trcprf$.  More specifically, let $\predcxriLS$ be the least squares imputation defined in \eqref{CLSdef1} and \eqref{LSdef}, i.e., the imputation used within $\trc$; note in particular that $\predcxriLS$ ignores all of the entries of $\xti$ except $\predri$.  Let  $\predcxriRF$ denote the imputation from a random forest regression of $Y_{\cg \setminus i}$ on $\xt_{\cg \setminus i}$. We then define an ensemble imputation  
\begin{equation}\label{eq:optimalImputation}
\predcxriEN = \gci \predcxriLS + (1-\gci) \predcxriRF
\end{equation}
which is an interpolation between $\predcxriLS$ and $\predcxriRF$, where the interpolation parameter $\gci$ is such that $0 \le \gci \le 1$ and is given by
\begin{equation*}
\gci = \argmin_{\gamma \in [0,1]} \sum_{j \in \mathcal{C} \setminus i} \left\{Y_j - \left[\gamma \predcxrijLS + (1-\gamma) \predcxrijRF \right]\right\}^2
\end{equation*}
where $\predcxrijLS$ is defined analogously to $\predcxriLS$, but with both observations $i$ and $j$ removed, and similarly for $\predcxrijRF$.  That is, the interpolation parameter $\gci$ is obtained empirically to minimize mean squared error, and is obtained from a leave-one-out procedure, which ensures that $\gci \independent T_i$, and thus $\predcxriEN \independent T_i$.  We denote the resulting ensemble-based estimator $\trcpen$.  The imputation strategy \eqref{eq:optimalImputation} allows $\trcpen$ to triangulate between $\trc$ and $\trcprf$, and therefore combines the advantages of both, at the cost of estimating only one additional parameter (i.e., $\gamma_i^c$).

\section{Estimating Effects in 33 Online Experiments} \label{sec:assistments}
\subsection{Data from the ASSISTments TestBed}
\label{sec:assistmentsdata}

We apply and evaluate the methods described in this work to a set of 33 randomized controlled experiments run within the ASSISTments TestBed, described in the Introduction.  These A/B tests contrast a variety of pedagogical conditions in modules teaching 6th, 7th, and 8th grade mathematics content. 
For our purposes, the outcome of interest was completion of the module, a binary variable.

In general, once a TestBed proposal is approved, based on Institutional Review Board and content quality criteria, its experimental conditions are embedded into an ASSISTments assignment. This is then assigned to students, either by a group of teachers recruited by the researcher or, more commonly, by the existing population of teachers using ASSISTments in their classrooms. As an example, consider an experiment comparing text-based hints to video hints. The proposing researcher would create the alternative hints and embed them into particular assignable content, a ``problem set.'' Then, any time a teacher assigns that problem set to his or her students, those students are randomized to one of the conditions, and, when they request hints, receive them as either text or video.

There are several types of problem sets that researchers can utilize when developing their experiments. In the case of the 33 experiments observed in this work, the problem sets are mastery-based assignments called “skill builders.” As opposed to more traditional assignments requiring students to complete all problems assigned, skill builders require students to demonstrate a sufficient level of understanding in order to complete the assignment. By default, students must simply answer three consecutive problems correctly without the use of computer-provided aid such as hints or scaffolding (a type of aid that breaks the problem into smaller steps). In this way, completion acts as a measure of knowledge and understanding as well as persistence and learning, as students will be continuously given more problems until they are able to reach the completion threshold. ASSISTments also includes a “daily limit” of ten problems to encourage students to seek help if they are struggling to reach the threshold.

After the completion of a TestBed experiment, the proposing researcher may download a dataset which includes students' treatment assignments and their performance within the skill builder, including an indicator for completion. Additionally, the dataset includes aggregated features that describe student performance within the learning platform prior to random assignment for each respective experiment.  Summary statistics for the nine covariates we used in our analyses, pooled across experiments, are displayed in Table \ref{tab:covariates}.  These include the numbers of problems worked, and assignments and homework assigned, percent of problems correct on first try, assignments completed, and homework completed at the student and class level, and students' genders, as guessed by an internal ASSISTments algorithm based on first names.
We imputed missing covariate values separately within each experiment.  When possible, we used the mean of observed values from students in the same classroom; otherwise we used the grand mean.  We combined this data with disaggregated log data from students' individual prior assignments.

% latex table generated in R 4.0.3 by xtable 1.8-4 package
% Thu May 11 13:40:22 2023
\begin{table}[ht]
\centering
\begin{tabular}{rrrr}
  \hline
 & Mean & SD & \% Missing \\ 
  \hline
Problem Count & 601.13 & 784.45 & 2 \\ 
  Percent Correct & 0.68 & 0.13 & 2 \\ 
  Assignments Assigned & 104.25 & 413.94 & 13 \\ 
  Percent Completion & 0.89 & 0.21 & 13 \\ 
  Class Percent Completion & 0.90 & 0.13 & 22 \\ 
  Homework Assigned & 25.97 & 29.90 & 50 \\ 
  Homework Percent Completion & 0.93 & 0.16 & 59 \\ 
  Class Homework Percent Completion & 0.93 & 0.09 & 56 \\ 
   Guessed Gender&Male: 36\%&Female: 36\%&Unknown: 28\%\\
 \hline
\end{tabular}
\caption{Summary statistics for aggregate prior ASSISTments performance used as within-sample covariates: number of problems worked, and assignments and homework assigned, percent of problems correct on first try, assignments completed, and homework completed at the student and class level, and students' genders, as guessed by ASSISTments based on first names.} 
\label{tab:covariates}
\end{table}

\subsection{Imputations from the Remnant}
\label{sec:33exp}
We also gathered analogous data from a large remnant of students who did not participate in any of the 33 experiments we analyzed. Ideally, the remnant would consist of previous ASSISTments students who had worked on the skill builders on which the 33 experiments had been run.  If that were the case, we would have considered 33 outcomes of interest, say $Y_{s}$, denoting completion of skill builder $s$.   Unfortunately, due to labeling conventions in the ASSISTments database, this was only feasible for 11 of the 33 experiments.  Instead, for all 33 experiments, we used prior ASSISTments data to impute one outcome, completion of a generic skill builder.

Rather than use the entire set of past ASSISTments users to build a remnant, we selected students who resembled those who participated in the 33 experiments. For the 11 experiments that we were able to match to other prior work, the remnant consisted of previous students who had worked on at least one of the skill builders in the experiments. For the remaining 22 experiments, we first observed the collection of problem sets given to students in the experiments before being assigned. The remnant consisted of all other ASSISTments users who had been assigned to at least one of those assignments. In other words, the remnant consisted of students who did not participate in any of the 33 experiments, but had worked on some of the same content as those who did. In all, the remnant consisted of 141,039 distinct students.  Sample sizes and skill builder completion rates in the 33 experiments are given in an online appendix in Table \ref{tab:info}.

We gathered records of up to ten assigned skill builders for each student in the remnant, and for each skill builder recorded the number of problems the student started, completed, requested help on, and answered correctly, the total amount of time spent, and assignment completion (i.e., skill mastery). Then, we fit a type of recurrent neural network \citep{williams1989learning} called Long-Short Term Memory (LSTM) \citep{hochreiter1997long} to the resulting panel data.  The model attempts to detect within-student trends in assignment completion and speed (i.e., the number of problems needed for skill mastery); please see Appendix \ref{sec:deepLearning} for further details. Using 10-fold cross validation within the remnant, we estimated the area under the ROC curve as 0.82 and a root mean squared error of 0.34 for the dependent measure of next assignment completion.

After fitting and validating the model in the remnant, we used it to predict skill builder completion for each subject in each of the 33 experiments.  To do so, we gathered log data for each student from up to ten previous assigned skill builders. (Students in the experiments with no prior data were dropped from all analyses.) Using the model fit in the remnant, we predicted whether each student would complete his or her next assigned skill builder. The resulting predictive probabilities were used as $\predr$ in the following analyses. 

\subsection{Results}

In each of the 33 experiments, we calculated five different unbiased ATE estimates: [1] the simple difference-in-means estimator $\tsd$ (equation \ref{eq:tauSD}); [2] the remnant estimator $\trebar$ (equation \ref{eq:taurebar}); [3] \ReLOOP\ (Section \ref{sec:reloop}); [4] \LOOP\ (Section \ref{sec:loop}) where $\bx$ denotes only those covariates supplied within the TestBed, as listed in Table \ref{tab:covariates}; and [5] \ReLOOPEN\ (Section \ref{sec:reloopplus}), using both $\predr$ and the provided TestBed covariates $\bx$.  These five methods are all design-based and unbiased, but they differ in their adjustment for covariates---both in the data they use for the adjustment, and in how the adjustment is effected.  Notably, in this application the remnant-based predictions $\predr$ are not functions only of $\bx$.  The covariates in $\bx$ are limited to aggregated data that summarize a student's previous performance (Table \ref{tab:covariates}), whereas the predictions $\predr$ are based on a more fine-grained longitudinal analysis of each student's log data.

Since each of these estimates is unbiased, we will focus on their estimated sampling variances. To aid interpretability, we will express contrasts between the sampling variances of two methods in terms of sample size. The estimated sampling variance of each estimator we consider is inversely proportional to sample size (see, e.g., equation \ref{loopvarhat}). Therefore, reducing the sampling variance of an estimator by, say, 1/2 is equivalent to doubling its sample size. Under that reasoning, the following discussion will refer to the ratio of estimated sampling variances as a ``sample size multiplier.''

\subsubsection{Remnant-Based Adjustment: Comparing $\trebar$ and \ReLOOP}

\begin{figure}
\centering
\input{images/fig4.tex}
\caption{A %labeled 
dotplot showing sample size multipliers (i.e.\ sampling
  variance ratios) comparing $\tsd$, $\trebar$, and \ReLOOP\ on the 33 ASSISTments TestBed experiments.}
\label{fig:ses1}
\end{figure}

Figure \ref{fig:ses1} compares $\tsd$, $\trebar$, and \ReLOOP\ on the 33 ASSISTments TestBed experiments. Each dot in the figure corresponds to a sample size multiplier comparing two estimated sampling variances in a particular experiment.  The vertical line at 1.0 indicates experiments in which the two methods gave approximately equal sampling variances. Dots to the right of the line correspond to experiments in which the variance in the denominator of the fraction was lower, and dots to the left of the line correspond to experiments in which the variance in the numerator was lower.

The leftmost plot contrasts $\trebar$ with $\tsd$. In four experiments, the variances of $\trebar$ and $\tsd$ were approximately equal, and in 27 experiments $\trebar$ outperformed $\tsd$. Notably, in one case (experiment \#33) the adjustment provided by $\trebar$ was equivalent to a roughly 85\% increase in sample size, and in another (experiment \#27) the adjustment was equivalent to a roughly 50\% increase. On the whole, $\trebar$ offers substantial gains in precision relative to $\tsd$. On the other hand, in two experiments the sampling variance of $\trebar$ was higher than that of $\tsd$.  Most notably, in one experiment (\#2) the adjustment given by $\trebar$ was equivalent to a roughly 45\% \emph{decrease} in sample size. In this case, apparently, the imputations from the model fit to the remnant were particularly inaccurate in the experimental sample. Because experimental outcomes played no role in determining the adjustment provided by $\trebar$, the adjustment was blind to this inaccuracy, and was unable to anticipate the resulting increase in variance in those cases.

In contrast, the \ReLOOP\ estimator incorporates information on imputation accuracy into its covariate adjustment. The middle panel of Figure \ref{fig:ses1} shows that across the board, \ReLOOP\ variances were smaller or roughly equal to those of $\tsd$. That is, \ReLOOP\ successfully avoided the risk that poor imputations pose to $\trebar$, and never increased variance relative to $\tsd$.  Moreover, in those cases in which $\trebar$ performed well, \ReLOOP\ tended to perform even better. For instance, in experiment \#33, the adjustment provided by \ReLOOP\ was equivalent to a roughly 100\% increase in sample size (relative to $\tsd$).

The rightmost panel of Figure \ref{fig:ses1} compares $\trebar$ to \ReLOOP\ explicitly: \ReLOOP\ sample variances dominated those of $\trebar$. In roughly half of the experiments, $\trebar$ and \ReLOOP\ performed similarly, and in the remaining half \ReLOOP\ improved upon $\trebar$.  Proposition \ref{prop:reloopbetter}, above, guarantees that \ReLOOP\ will dominate both $\tsd$ and $\trebar$ in the limit as $N\rightarrow\infty$; Figure \ref{fig:ses1} gives examples of this property in finite samples.

\subsubsection{Incorporating Standard Covariates}  

\begin{figure}
\centering
\input{images/fig5alt.tex}
\caption{A dotplot showing sample size multipliers (i.e.\ sampling
  variance ratios) comparing \ReLOOPEN\ to \ReLOOP\, \LOOP, and $\tsd$, respectively, on the 33 ASSISTments TestBed experiments.}
\label{fig:ses2}
\end{figure}

Figure \ref{fig:ses2} compares \ReLOOPEN\ to \ReLOOP, \LOOP, and $\tsd$, respectively, on the same 33 ASSISTments TestBed experiments. The left panel, comparing \ReLOOP\ to \ReLOOPEN, shows the impact of including standard covariates, incorporating them as described in the ensemble approach of Section \ref{sec:reloopplus}.  In all but one case, the sampling variance of \ReLOOPEN\ was less than or roughly equal to that of \ReLOOP---that is, including standard covariates improved precision.  In sixteen cases, this improvement was equivalent to increasing the sample size by more than 10\%; in eight of those cases the improvement was more than 25\% and in one experiment, it was more than 80\%. 

The middle panel compares the sampling variances of \ReLOOPEN\ and \LOOP, showing the extent to which including $\predr$ improved precision relative to using only standard covariates. 
In all but two experiments the sampling variance of \ReLOOPEN\ was less than or roughly equal to the sampling variance of \LOOP. In six experiments the improvement was equivalent to an increase in sample size of more than 10\%, and in one of those cases, experiment \#33, the improvement was equivalent to an over 65\% increase in sample size. 

The rightmost panel compares the sampling variances of \ReLOOPEN\ and the simple difference-in-means estimator, showing the total impact of covariate adjustment on statistical precision. Across every one of the 33 experiments, the estimated sampling variances for \ReLOOPEN\ were lower or roughly equal to those of $\tsd$. In 28 experiments the improvement was equivalent to increasing the sample size by more than 10\%; in 15 of those the improvement was equivalent to a more than 25\% increase in sample size, and in the case of experiment \#33, the improvement was equivalent to a 175\% increase in sample size.  

\FloatBarrier

\subsubsection{Covariate Adjustment with ANCOVA}\label{sec:ancova}

\begin{figure}
\centering
\input{images/OlsReloop.tex}
\caption{A dotplot showing sample size multipliers (i.e.\ sampling variance ratios), from contrasts between the  difference-in means estimator $\tsd$, sample-splitting estimators $\trc$ and $\trcpen$, and \textsc{ancova} estimators $\tls[\predr]$ and $\tls[\xt]$ with HC2 standard errors, on the 33 ASSISTments TestBed experiments.}
\label{fig:ols}
\end{figure}

The methodological development in Section \ref{sec:theMethod} focused on the covariate-adjusted estimator $\tss$, which can incorporate nearly any imputation method---including least squares regression, random forests, and ensemble methods such as \eqref{eq:optimalImputation}---while maintaining the advantages of design-based estimation, namely unbiased effect estimation and conservative standard error estimation. \label{text:ss-advantages}
However, the strategy of covariate adjustment using $\predr$ or $\xt$ is compatible with any covariate-adjusted estimator. 
For instance, an anonymous reviewer suggested estimating $\bar{\tau}$ via \textsc{ancova}---that is, fitting the model 
\begin{equation}\label{eq:ols}
Y_i=\mu+\beta T_i+\bm{\gamma}^T\bm{X}_i+\epsilon_i
\end{equation}
with ordinary least squares, where $\bm{X}_i=\predr_i$ or $\xt_i$ and estimating $\bar{\tau}$ with the estimated coefficient $\hat{\beta}$, which we will denote as $\tls[\predr]$ or $\tls[\xt]$, respectively (also see \cite{procova}). 
\textsc{Ancova} estimators $\tls[\cdot]$ are typically slightly biased, but consistent, with bias decreasing with $1/N$ \citep{freedman2008regression}.

Figure \ref{fig:ols} compares the estimated sampling variances of $\tsd$, $\trc$, $\trcpen$, $\tls[\predr]$ and $\tls[\xt]$ when applied to the 33 TestBed experiments.
(The \textsc{ancova} standard errors were estimated using the HC2 sandwich formula \citep[c.f.][]{mackinnon1985some}, the default for the \texttt{lm\_robust()} routine of the \texttt{estimatr} package in \texttt{R} \cite{estimatr,Rcite}.)
For the sake of comparison, the top panels of Figure \ref{fig:ols} reproduce results from Figures \ref{fig:ses1} and \ref{fig:ses2}, comparing $\tsd$ to $\trc$ and $\trcpen$.
The middle two panels contrast the sampling variances of $\tls[\predr]$ and $\tls[\xt]$ to $\tsd$.
Like $\trc$ and $\trcpen$, the \textsc{ancova} estimates are, in many cases, much more precise than $\tsd$. 
On the other hand, in some cases $\tls[\xt]$ had a noticeably higher sampling variance than $\tsd$---in one case, the effect of \textsc{ancova} adjustment was roughly equivalent to reducing the sample size by about 15\%.

Across the board, the precision gains afforded by $\tls[\xt]$ were typically slightly less than those afforded by $\trcpen$.
This is displayed in the bottom row of Figure \ref{fig:ols}, which compares the \textsc{ancova} estimators directly to $\trc$ and $\trcpen$. 
While $\trc$ and $\tls[\predr]$ tend to have very similar  sampling variances, $\trcpen$ is often (but not always) much more precise than $\tls[\xt]$.
Presumably, this advantage is due to the flexibility of the ensemble learner in $\trcpen$, which is in contrast to the linear additive adjustment of \textsc{ancova}.

\section{Discussion} \label{sec:conclusion}
 
Randomized experiments and observational studies have complementary strengths.  Randomized experiments allow for unbiased estimates with minimal statistical assumptions, but often suffer from small sample sizes.  Observational studies, by contrast, may offer huge sample sizes, but typically suffer from confounding biases which must be adjusted for, often through statistical modeling with questionable assumptions.  In this paper we have attempted to combine the strengths of both.  More specifically, we have sought to improve the precision of randomized experiments by exploiting the rich information available in a large observational dataset.

Our approach may be summarized as ``first, do no harm.''  A randomized experiment may be analyzed by taking a simple difference in means, which on its own provides a valid design-based unbiased estimate.  The rationale for a more complicated analysis would be to improve precision.  Our goal has therefore been to ensure that, in attempting to improve precision by incorporating observational data, we have not actually made matters worse.  In particular, we have sought to ensure that (1) no biases in the observational data may ``leak'' into the analysis, (2) we can reasonably expect to improve precision, not harm it, and (3) inference may be justified by the experimental randomization, without the need for additional statistical modeling assumptions.

In this paper, we focused on covariate adjustment using $\tss$, which is exactly unbiased; if a different covariate adjustment method were used instead of $\tss$, such as those proposed by \cite{lin2013agnostic} or \cite{guo2021generalized}, then the resulting estimator would inherit its properties, instead. 
We focus on the sample splitting estimator for two reasons. First, because we believe that a guarantee of exact unbiasedness will remove barriers to the method's adoption. Incorporating observational data into the analysis of RCTs may appear to be inherently risky, or to undermine the rationale for randomization. A general guarantee that effect estimates 
are unbiased, even in finite samples, may alleviate those concerns. 
Second, $\tss$ is compatible with nearly any imputation algorithm, 
and this flexibility may be especially valuable when incorporating $\predr$. 
The analysis in Section \ref{sec:ancova} provides a nice illustration of this: while there is little difference between the standard errors of $\trc$ and analogous \textsc{ancova} estimates, $\trcpen$---which uses an ensemble imputation algorithm including random forests---tended to perform substantially better than an \textsc{ancova} estimator using the same data. \label{text:unbiasedDiscussion}

The results from the 33 A/B tests we analyzed suggest that incorporating information gleaned from the remnant of an experiment can indeed improve causal inference---but it does not always do so.  The extent to which the remnant can help improve precision depends on the quality of the remnant-based predictions, and this in turn depends on both the quality of the remnant data and the algorithm $\predrfun$.  %; Future research will explore what properties of remnants or prediction algorithms tend to work best. 
It is therefore important to include observational data judiciously---our methods dynamically adapt, taking advantage of observational data when it is useful and minimizing its role when it isn't. 

The focus of this paper was to show that these methods can improve statistical precision without incurring a statistical cost---i.e. without potentially increasing bias or standard errors. However, gathering remnant data and using it to train an algorithm may require substantial human and/or computational resources. Therefore, it is crucial for applied researchers to be able to anticipate in advance the extent to which our methods will outperform estimators that use only RCT data. These cost benefit calculations can take place at two different points in the research process: before collecting any remnant data, and after collecting data from the remnant but before using it to train a predictive algorithm. Before collecting data from the remnant, researchers may be able to use observed properties of RCT data, along with anticipated, but yet unobserved, properties of the remnant to decide whether to proceed. For instance, some initial empirical results, currently under review, suggest that our methods have the potential to improve statistical precision across a wide range of RCT sample sizes, but that the most dramatic improvements tend to occur when the RCT sample size is small or moderate. Intuition suggests that the greatest contribution of auxiliary data will occur when a large number of covariates are available but there is little prior information on which covariates are the most important. If remnant data are available, analysts may decide whether to use it to train a predictive algorithm based on explicit comparisons between covariate distributions in the remnant and in the RCT (for example Appendix \ref{sec:rem-exp-diff}). Intuition suggests that our methods hold the greatest promise when covariates in the remnant and RCT are most similar.    

These, and other questions will be best answered by applying our methods in a wide variety of contexts. While we have focused on the ASSISTments platform in this paper, future work will explore what other sources of auxiliary data, and corresponding prediction algorithms, may be particularly well suited to improving the precision of RCTs typically encountered in education research.  Indeed, one of the advantages of developing models on observational data in this manner is that a wide variety of models may be explored, tested, and iteratively improved upon before they are applied to an RCT.

\label{text:both-conditions-in-remnant}
In particular, it will be interesting to consider cases in which the experimental condition varies---and is recorded---in the remnant.
For instance, the remnant from an RCT contrasting two common medical procedures may include medical records from previous patients who underwent one or the other procedure.
In that case, analysts may train remnant models to impute both potential outcomes as, say, $\hat{y}^{rc}(\bx)$ and $\hat{y}^{rt}(\bx)$. 
Then (following Section \ref{sec:intro.remnant}) they may set $\hmi=p\hat{y}^{rc}(\bx_i)+(1-p)\hat{y}^{rt}(\bx_i)$ and estimate average treatment effects using $\thm$, or (following Sections \ref{sec:reloop}--\ref{sec:reloopplus}), include $\hat{y}^{rc}(\bx)$ and $\hat{y}^{rt}(\bx)$ within a sample-splitting estimator $\tss$, perhaps alongside other covariates. 
We expect that the inclusion of both types of exposures in the remnant may enhance remnant-based estimators even further, and hope to explore these possibilities in future research.

%%%%%%%%%%%%%%%%%%%%%%%%%%%%%%%%%%%%%%%%%%%%%%%%%%%%%%%%%%%%%%%%%

\section{Acknowledgements}
We would like to thank Ben Hansen and Charlotte Mann for helpful discussions.  We would also like to thank the two anonymous reviewers for their comments.

\section{Funding information}
The research reported here was supported by the Institute of Education Sciences, U.S. Department of Education, through Grant R305D210031. The opinions expressed are those of the authors and do not represent views of the Institute or the U.S. Department of Education.  E.\ Wu was supported by NSF RTG grant DMS-1646108. N.\ Heffernan oversaw the creation of the 33 experiments and provided the data from ASSISTments; we want to acknowledge the funding that created/related to ASSISTments from 1) NSF (e.g., 2118725, 2118904, 1950683, 1917808, 1931523, 1940236, 1917713, 1903304, 1822830, 1759229, 1724889, 1636782, \& 1535428), 2) IES (e.g., R305N210049, R305D210031, R305A170137, R305A170243, R305A180401, R305D210036, R305A120125, \& R305R220012), 3) GAANN (e.g., P200A180088 \& P200A150306), 4) EIR (U411B190024 \& S411B210024), 5) ONR (N00014-18-1-2768), and 6) Schmidt Futures. None of the opinions expressed here are those of the funders.

\section{Conflict of Interest}
Authors state no conflict of interest.

\section{Code and Data}
Code and data are available at \url{https://osf.io/d9ujq/}

\bibliographystyle{unsrtnat}
\bibliography{rebarloop,rebarloop2}

\newpage
\appendix

\clearpage
\pagenumbering{arabic}
\setcounter{page}{1}

\clearpage
\section{Summary of A/B Test Data}

Table \ref{tab:info} gives sample sizes and skill builder completion rates in the 33 experiments discussed in the paper.

\begin{table}[!h]
\centering
\begin{tabular}[t]{rrrrr|rrrrr}
%\toprule
\hline
\multicolumn{1}{c}{ } & \multicolumn{2}{c}{n} & \multicolumn{2}{c}{\% Complete} & \multicolumn{1}{c}{ } & \multicolumn{2}{c}{n} & \multicolumn{2}{c}{\% Complete} \\
\cmidrule(l{3pt}r{3pt}){2-3} \cmidrule(l{3pt}r{3pt}){4-5} \cmidrule(l{3pt}r{3pt}){7-8} \cmidrule(l{3pt}r{3pt}){9-10}
Experiment & Trt & Ctl & Trt & Ctl & Experiment & Trt & Ctl & Trt & Ctl\\
\midrule
1 & 956 & 961 & 94 & 93 & 18 & 165 & 170 & 92 & 89\\
2 & 329 & 363 & 98 & 96 & 19 & 259 & 246 & 82 & 85\\
3 & 649 & 610 & 86 & 88 & 20 & 199 & 213 & 85 & 88\\
4 & 201 & 228 & 97 & 95 & 21 & 258 & 276 & 82 & 80\\
5 & 910 & 887 & 73 & 72 & 22 & 188 & 193 & 89 & 85\\
%\addlinespace
6 & 931 & 900 & 61 & 64 & 23 & 242 & 266 & 81 & 76\\
7 & 360 & 344 & 88 & 88 & 24 & 279 & 235 & 72 & 69\\
8 & 492 & 463 & 79 & 81 & 25 & 269 & 288 & 65 & 59\\
9 & 215 & 211 & 93 & 92 & 26 & 225 & 232 & 73 & 74\\
10 & 231 & 197 & 92 & 91 & 27 & 267 & 256 & 63 & 62\\
%\addlinespace
11 & 607 & 578 & 68 & 63 & 28 & 228 & 244 & 68 & 64\\
12 & 370 & 384 & 83 & 82 & 29 & 239 & 258 & 54 & 48\\
13 & 338 & 289 & 88 & 84 & 30 & 74 & 92 & 91 & 84\\
14 & 478 & 476 & 76 & 73 & 31 & 69 & 67 & 91 & 87\\
15 & 193 & 209 & 89 & 93 & 32 & 76 & 81 & 62 & 70\\
%\addlinespace
16 & 404 & 451 & 73 & 69 & 33 & 15 & 11 & 73 & 55\\
17 & 264 & 274 & 84 & 85 &  &  &  &  & \\
\bottomrule
\end{tabular}
\caption{\label{tab:info}Sample sizes and \% homework completion---the outcome of interest---by treatment group in each of the 33 A/B tests.}
\end{table}

\FloatBarrier

\section{Proof of Proposition} \label{sec:propositions}

\propreloopbetter*
\begin{proof}
We first explicitly define $\varhat[\tgr]$.  Let $\rgr_i = Y_i - b\predri$ and define
\begin{equation} \label{varhattgr}
    \varhat[\tgr] = \frac{\srgrc}{\nc} + \frac{\srgrt}{\nt}.
\end{equation}
Comparing (\ref{varhattgr}) to (\ref{t-test-like-variance}) we see that in order to prove the desired result, it is sufficient to show that $\Ech/\srgrc \pto \phi_c(b) \le 1$ and $\Eth/\srgrt \pto \phi_t(b) \le 1$ where $\phi_c(b)$ and $\phi_t(b)$ are constants that depend on $b$.  We will show $\Ech/\srgrc \pto \phi_c(b) \le 1$; the argument for $\Eth/\srgrt \pto \phi_t(b) \le 1$ is analogous.

Let $\Ect$ be defined similarly to $\Ech$, except that $\Ect$ does not use leave-one-out predictions, and instead uses predictions based on all of the data.  That is,
\begin{equation} \label{Ectilde}
\Ect = \frac{1}{\nc}\sum_{i \in \cg}\left[\predcxrit - \yci\right]^2 
\end{equation}
where $\predcxrit = \tilde{a}^c + \tilde{b}^c\predri$ and where $\tilde{a}^c$ and $\tilde{b}^c$ are the intercept and slope coefficients, respectively, from a univariate regression of $Y_{\cg}$ on $\predr_{\cg}$ (not dropping observation $i$).  Now note the following: (a) both $\srgrc$ and $\Ect$ converge to finite constants; (b) the constant to which $\srgrc$ converges is not 0; and (c) $\Ect \le \srgrc$ for all $n_c \ge 2$.  (a) is ensured by the moment conditions.  (b) is ensured by the condition $-1 < \mathrm{corr}(\yc, \predr) < 1$.  (c) follows from the fact that $\Ect = \frac{1}{\nc} \min_{(a,b)} \sum_{j \in \mathcal{C} }\left[Y_j - \left(a + b\predrj \right) \right]^2$ whereas $\srgrc = \frac{1}{\nc-1} \sum_{j \in \mathcal{C} }\left[Y_j - b\predrj -\overline{Y_j-b\predrj} \right]^2=\frac{1}{\nc-1} \min_{a} \sum_{j \in \mathcal{C} }\left[Y_j - \left(a + b\predrj \right) \right]^2$ for a fixed value of $b$, and thus the minimization problem of the former is less constrained than the latter.  As a result of (a), (b), and (c), it follows that $\Ect/\srgrc \pto \tilde{\phi}_c(b) \le 1$.

To complete the proof, it suffices to show that $\Ech \pto \Ect$.  After some algebra,
\begin{equation}
    \Ech = \frac{1}{\nc}\sum_{i \in \cg}\left[\predcxrit - \yci\right]^2/(1-h_i)^2 
\end{equation}
where 
\begin{equation}
h_i = \frac{1}{(\nc - 1)S^2(\predr_\cg)}\left[\overline{(\predr_\cg)^2} -2\overline{\predr_\cg}\predri + (\predri)^2\right].
\end{equation}
Here, the $h_i$ are the diagonal entries of the hat matrix from the regression of $Y_{\cg}$ on $\predr_{\cg}$ and we use the well-known shortcut formula for calculating leave-one-out residuals \cite{seber2012linear}.  Note that $0 < h_i \le 1$.  Thus,
\begin{equation}
    |\Ech - \Ect| \le \left[\frac{1}{(1-h^*)^2}-1\right] \Ect  
\end{equation}
where $h^* = \max_\cg h_i$.  However, because of the moment conditions on $\predr$, it is straightforward to show that $h^* \pto 0$, and therefore $\Ech \pto \Ect$.
\end{proof}

\newpage

\section{Deep Learning in the Remnant to Impute Completion}\label{sec:deepLearning}
\noindent

We used the remnant to train a variant of a recurrent neural network \citep{williams1989learning} called a Long-Short Term Memory (LSTM) network \citep{hochreiter1997long} to predict students' assignment completion. Deep learning models, and particularly LSTM networks, have been previously applied successfully to model similar temporal relationships in various areas of educational research \citep{piech2015deep,botelho2017improving}. 

Neural networks, including recurrent networks such as those explored here, are universal function approximators \citep{hornik1989multilayer,schafer2006recurrent}. These models are commonly represented as ``layers'' of neurons; these feed from a set of inputs, through one or more ``hidden'' layers, to an output layer, where, in the basic case, the output of each layer is determined by Equation~\ref{eq:ffnn}. In that equation, $W$ is a set of learned weights, comparable to the coefficients learned in a regression model. The activation function $a(.)$ is commonly a non-linearity that is applied to each layer in the network.
\begin{equation}
\label{eq:ffnn}
    h_\ell = a( W * h_{\ell-1} + b) 
\end{equation}
where $h_0$ is the input vector $X$.

Recurrent networks build upon this formulation to add layers that utilize not only the outputs of preceding layers, but also incorporate values from previous time steps within a supplied series; in time series data, the model estimates for a particular time step may be better informed by information from previous time steps, and a recurrent network structure is designed to take advantage of this likelihood. The LSTM networks explored here incorporate a set of ``gates'' that regulate the flow of data from both preceding layers and a ``cell memory'' that is calculated through previous time steps. The output of this LSTM layer is given by Equations~\ref{eq:lstmf}-\ref{eq:lstmh}. 
\begin{equation}
    \label{eq:lstmf}
    f_t = \sigma( W_f * [h_{t-1}, x_t] + b_f)
\end{equation}
\begin{equation}
    \label{eq:lstmi}
    i_t = \sigma( W_i * [h_{t-1}, x_t] + b_i)
\end{equation}
\begin{equation}
    \label{eq:lstmo}
    o_t = \sigma( W_o * [h_{t-1}, x_t] + b_o)
\end{equation}

\begin{equation}
    \label{eq:lstmtc}
   \Tilde{C}_t = \tanh( W_C * [h_{t-1}, x_t] + b_C)
\end{equation}
\begin{equation}
    \label{eq:lstmc}
   C_t = f_t*C_{t-1} + i_t * \Tilde{C}_t
\end{equation}

\begin{equation}
    \label{eq:lstmh}
   h_t = o_t*\tanh(C_t) 
\end{equation}
where $t$ is given as recurrent layer $\ell$ on the given timestep.

In the above equations, gates $f_t$ and $i_t$ inform the cell memory $C_t$ how much of the previously-computed memory should be forgotten and updated with the output of the previous time step and preceding layer, respectively.

As a recurrent network, the model is trained by iteratively updating the weight matrices ($W$ in the above equations) through a procedure known as backpropagation through time  \citep{werbos1990backprop} combined with a stochastic gradient descent method called Adam \citep{kingma2014adam}. These methods are informed by a cost function (sometimes called a loss function) that is calculated through the comparison of model predictions with supplied ground truth labels. In this work, we adopted a network structure that incorporates multi-task learning \citep{caruana1997multitask} as a means of regularization. In other words, our model ultimately produces two sets of predictions corresponding with two outcomes of interest: student completion and inverse mastery speed, each on the subsequent assignment. By optimizing model weights in regard to these two outcomes, the process helps prevent the model from overfitting to either outcome; as student completion of their next assignment is the outcome explored in this work, the second outcome of inverse mastery speed is used only for this regularization purpose and is not utilized in subsequent analyses. Given that student completion is binary and inverse mastery speed is a continuous measure, the formula of which is described in Table~\ref{tab:modelfts}, the cost function for our model training was calculated as a linear combination of two separate cost functions. Binary cross-entropy is used in the case of next assignment completion, as shown in Equation~\ref{eq:BCE}, while RMSE (Equation~\ref{eq:RMSE}) is used in the case of inverse mastery speed on the next assignment. The final cost function is then given as Equation~\ref{eq:cost}, which is calculated over smaller smaller ``batches'' of samples over multiple training cycles known as epochs.
\begin{equation}
\label{eq:BCE}
    \mathrm{BCE} = -(y*\log(\hat{y})+(1-y)*\log(1-\hat{y}))
\end{equation}
\begin{equation}
\label{eq:RMSE}
    \mathrm{RMSE} =  \sqrt{\frac{1}{n}\sum(y-\hat{y})^2}
\end{equation}
\begin{equation}
\label{eq:cost}
    \mathrm{Cost}_{\mathrm{batch}} = \frac{\mathrm{BCE}_{\mathrm{batch}}+\mathrm{RMSE}_{\mathrm{batch}}}{2}
\end{equation}
The training of the model continues by calculating the cost and iteratively updating model weights over multiple epochs until a stopping criterion is met. In this regard, we hold out 30\% of the training data as a validation set. Model performance is calculated on this validation set after each epoch of training. Training ceases once the model performance on this validation set stops improving (i.e., the difference of model performance from one epoch to the next falls below a designed threshold). To avoid stopping the training process too early due to small fluctuations in model performance on the validation set early in the training procedure, a 5-epoch moving average of validation cost is used as the stopping criterion.

The specific model structure used in this work observed an LSTM network comprised of 3 layers. We used 16 covariates to describe each single time step, which then feeds into a hidden LSTM layer of 100 nodes, which is used to inform an output layer of two units corresponding with the previously described two outcomes of interest. The input features used in this model, described in Table~\ref{tab:modelfts}, represent transformed and non-transformed versions of several metrics that describe different aspects of student performance within a single assignment. We considered sequences of at most ten worked skill builder assignments (c.f. Section~\ref{sec:assistmentsdata}), to predict student completion on a subsequent skill builder assignment.

\begin{table}[ht!]
\centering
\begin{tabular}{p{1.5in}p{3.5in}}
\hline
\textbf{Input Feature} & \textbf{Description} \\ \hline
Problems Started &
  The number of problems started by the student. (Untransformed \& Sq.Root) \vspace{4pt} \\ 
Problems Completed &
  The number of problems completed by the  student. (Untransformed \& Sq.Root) \vspace{4pt} \\ 
Inverse Mastery Speed &
  The inverse of the number of problems needed to complete the mastery assignment, or 0 where the student did not complete. (Untransformed \&  Sq.Root) \vspace{4pt} \\ 
Percent Correct &
  The percentage of problems answered correctly on the first attempt without the use of hints. (Untransformed \& Sq.Root) \vspace{4pt} \\ 
Assignment Completion &
  Whether the current assignment was completed by the student. \vspace{4pt} \\ 
Attempts Per Problem &
  The number of attempts taken to correctly answer each problem. (Avg. \& Sq.Root) \vspace{4pt} \\ 
First Response Time &
  The time taken per problem before making the first action. (Avg.) \vspace{4pt} \\ 
Problem Duration &
  The time, in seconds, needed to solve each problem. (Avg.) \vspace{4pt} \\ 
Days with Activity &
  The number of distinct days on which the student worked on each problem in the assignment. (Avg.) \vspace{4pt} \\ 
Attempted Problem First &
  Whether, on each problem, the first action was an attempt to answer (as opposed to a help request). (Avg.) \vspace{4pt} \\ 
Requested Answer Hint &
  Whether, on each problem, the student needed to be given the answer to progress. (Avg.) \vspace{4pt} \\ \hline
\end{tabular}
\caption{Assignment-level features in LSTM Model}
\label{tab:modelfts}
\end{table}

We specified the LSTM model's hyperparameters (e.g., number of LSTM nodes, delta of stopping criterion, weight update step size) based on previously successful model structures and training procedures within the context of education. We evaluated the model using a 10-fold cross validation within the remnant to gain a measure of model fit (leading to an ROC area under the curve of 0.82 and root mean squared error of 0.34 for the dependent measure of next assignment completion). After this evaluation, the model is then re-trained using the full set of remnant data. This trained model is then used within the analyses described in Section~\ref{sec:assistments}.

\section{Comparing Covariates in the Remnant to the RCT}\label{sec:rem-exp-diff}
The requirement \eqref{eq:indPred} that imputations $\predcx$ and $\predtx$ are independent of treatment assignment $T_i$ precludes any use of RCT outcomes in training the imputation algorithm $\predrfun$.
This is due to the fact that, if there is indeed a treatment effect for any RCT subject, RCT outcomes are a function of $T$.
This restriction includes the use of $Y$ to select between competing $\predrfun$ algorithms, or to decide whether to use remnant-based predictions $\hat{y}^r$ for covariate adjustment at all.
That is, so long as analysts use only remnant outcomes, they may assess and modify $\predrfun$ without restriction without violating \eqref{eq:indPred}, they may not use outcome data from the RCT.

This restriction, however, does not extend to covariate data $\bm{x}$ from the RCT. An anonymous reviewer suggested developing a method comparing covariate distributions between the RCT and the remnant that may indicate the gain in precision an analyst may expect from including $\hat{y}^r$ in a covariate adjustment estimator.

Here we discuss a technique we attempted, although we do not believe that it achieved its aim.

The intuition behind our approach is based roughly on ``K-Nearest Neighbors'' classification---if a subject in the RCT closely resembles other subjects in the remnant, an algorithm trained on the remnant may be able to predict that subject's outcome accurately, whereas if an RCT subject is unlike many other remnant subjects, the prediction is not likely to be accurate.
Formally, let $K>0$ be an integer, and $D(\cdot,\cdot)$ be a distance measure.
Then, for subject $i$ in the RCT and subject $j$ in the remnant, let $d_{ij}=D(\bx_i,\bx_j)$, then, for each $i$, sort these distances so that $d_{i(1)}\le d_{i(2)}\le \dots $.
Finally, compute $\bar{d}_i^k=\sum_{k=1}^K d_{i(k)}/K$, the average distance between $\bm{x}_i$ and it's $K$ nearest neighbors.
The thought is that outcomes for subjects with low $\bar{d}_i^K$ should typically be easier to predict than subjects with larger $\bar{d}_i^K$.
Distances within the remnant may form a reasonable basis of comparison---that is, one may compare $\bar{d}_i^k$ to the distribution of average distances between remnant subjects and their $K$ nearest neighbors.

\begin{figure}
    \centering
    \includegraphics{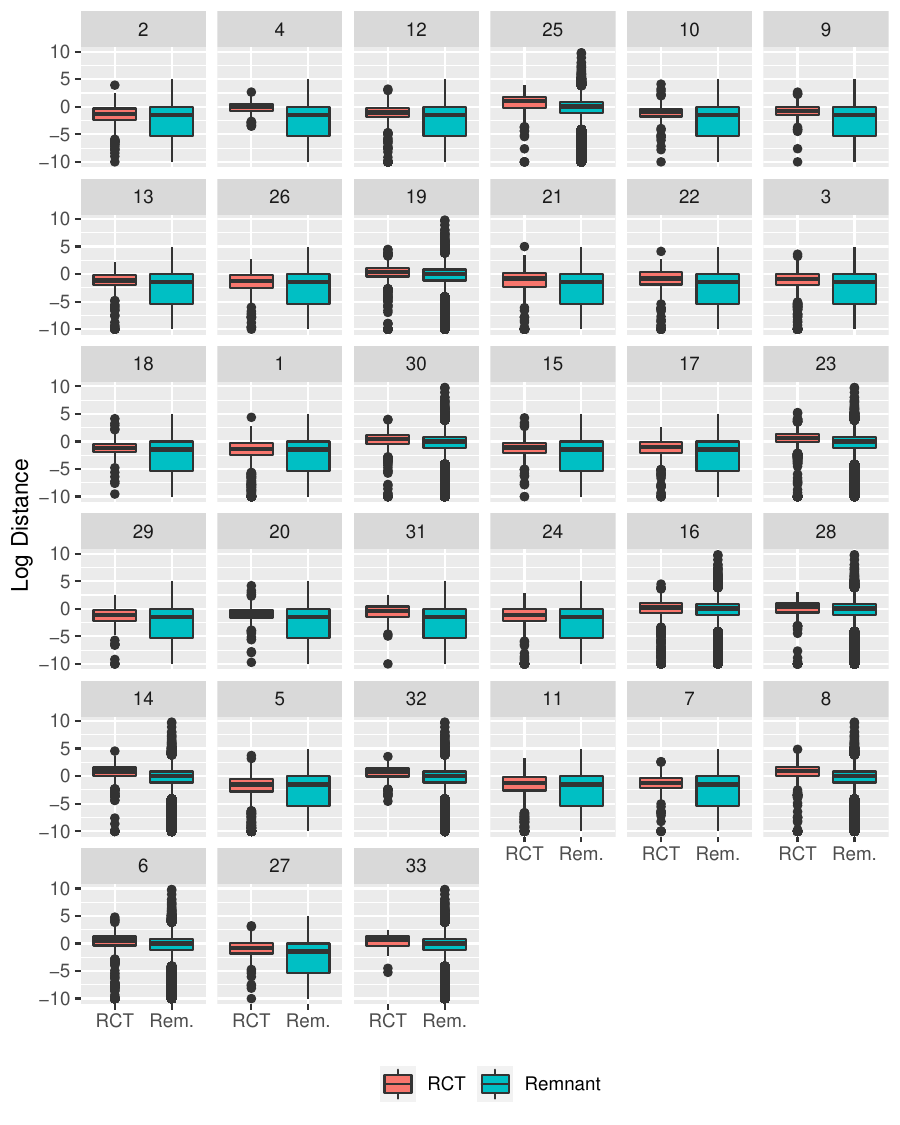}
    \caption{Boxplots comparing the distribution of $\bar{d}_i^5$ for each ASSISTments TestBed A/B test against the analogous distribution for the corresponding remnant. Panels are ordered lowest to highest according to $\varhat(\tsd)/\varhat(\trc)$}
    \label{fig:rem-exp-diff}
\end{figure}

To calculate this measure for TestBed A/B tests, we first flattened each subject's covariate data by averaging their assignment-level statistics and also including a covariate equal to the number of included assignments.
Then, we chose $K=5$ and $D(\cdot,\cdot)$ to be the Mahalanobis distance, with the covariance matrix estimated using the remnant.

Figure \ref{fig:rem-exp-diff} shows the results.
Each panel corresponds to a different A/B test, and displays a boxplot of $\bar{d}_i^K$ for subjects in the RCT next to an analogous boxplot for the remnant.
Note that across the 33 experiments, there were two distinct remnants, corresponding to two separate data draws.
The panels are sorted lowest to highest according to the ratio $\varhat(\tsd)/\varhat(\trc)$.%, which most directly reflects the success of $\predrfun$.

Unfortunately, no pattern is apparent, suggesting that $d_i^5$ is not a useful indicator of the variance reduction potential of algorithms trained in the remnant.
Future research may lead to modifications of $d_i^K$ or another measure entirely that may better anticipate $\predrfun$'s out-of-sample performance.
Fortunately, estimators $\trc$ and $\trcpen$ often perform well, and (in our examples) never harm precision, even when $\predrfun$ performs poorly in the RCT.

\clearpage
\pagenumbering{arabic}

\end{document}